\documentclass[journal]{IEEEtran}
\ifCLASSINFOpdf
\else
\fi
\usepackage{amssymb}
\usepackage{amsmath}
\usepackage{algorithm}
\usepackage{algorithmic}
\usepackage{hyperref}
\usepackage{tikz}
 
\usepackage{graphicx}
\usepackage{booktabs}
\usepackage{xspace}

\usepackage{multirow}
\usepackage{makecell}

\usepackage{ mathrsfs }
\usepackage{graphicx, adjustbox}
\usepackage{array}
\usepackage{setspace}
\usepackage{enumitem}
\usepackage{breqn}
\usepackage{bm}
\usepackage{ dsfont }
\usepackage{ upgreek }
\usepackage{textcomp}
\usepackage{multirow}
\usepackage{algorithm}
\usepackage{algorithmic}
\usepackage{caption}
\usepackage{pgfplots}
\usepackage{tkz-euclide}
\usepackage{subcaption}
\usepackage{tikz}{\tiny }
\usepackage{balance}
\usepackage{ tipa }

\usepackage{fancyhdr}
\usepackage{ upgreek }
\usepackage{soul}
\usepackage{breqn}
\usepackage[english]{babel}
\usepackage{braket} 

\addto\captionsenglish{}
\allowdisplaybreaks

\usepackage{amsthm}
\newtheorem{theorem}{Theorem}
\newtheorem{lemma}{Lemma}

\newtheorem{corollary}{Corollary}

\newtheorem{remark}{Remark}

\newtheorem{assumption}{Assumption}

\usepackage{fancyhdr}
\usepackage{ upgreek }
\usepackage{soul}
\makeatletter
\newcommand{\vast}{\bBigg@{4}}

\newcommand{\Vast}{\bBigg@{5}}
\begin{document}

\title{Geometric Fairness-Aware Routing for Federated Edge Networks}

\author{Ratun Rahman

\thanks{Ratun Rahman is with the Department of Electrical and Computer Engineering, University of Alabama in Huntsville, Huntsville, AL 35899, USA, email: rr0110@uah.edu} 
}



\markboth{accepted in the IEEE/ACM Transactions on Networking}%
{Shell \MakeLowercase{\textit{et al.}}: Bare Demo of IEEEtran.cls for IEEE Journals}
\maketitle

\begin{abstract}
Emerging 6G and edge-intelligent networks require effective and balanced routing algorithms among varied and spatially distributed devices.   Existing federated routing systems often prioritize aggregate latency or throughput above fairness and the underlying geometric structure of network topologies.   This paper describes Geo-FairFed, a geometric fairness-aware routing system that blends hyperbolic graph neural networks (HGNNs) and federated optimization to provide equal performance across edge nodes.   Each node learns topology-aware representations on a negatively curved manifold, which include hierarchical relationships and connection asymmetries.
A global aggregator next enforces fairness using a curvature-regularized aim that minimizes routing loss, geometric inconsistency, and an inequality penalty based on Jain's fairness index.   A theoretical analysis develops convergence guarantees under limited curvature and shows that the proposed fairness term results in a Pareto-improving equilibrium in routing performance.   Extensive simulations on dynamic 6G-edge and IoT topologies reveal that Geo-FairFed minimizes average latency by 20\%, reduces energy consumption by 17\%, and improves fairness by up to 21\% when compared to state-of-the-art federated and geometric routing protocols.   The study found that embedding topology in a hyperbolic manifold and including fairness into federated updates can significantly enhance the efficiency and equity of large-scale network routing.
\end{abstract}
\begin{IEEEkeywords}
Federated learning, geometric machine learning, fairness in networking, edge routing. 
\end{IEEEkeywords}

\maketitle



\section{Introduction}\label{sec:introduction}

\IEEEPARstart{T}{he}
increasing popularity of 6G and edge-intelligent networks is reshaping routing and resource management in large-scale communication systems.   In contrast to standard centralized architectures, future networks will consist of billions of diverse, geographically spread edge devices that must make swift, autonomous choices in unpredictable and dynamic situations \cite{bogyrbayeva2024machine}.   This paradigm shift has led to the implementation of \textit{machine learning-based routing}, where models learn from traffic patterns, topology changes, and link STATEs to optimize end-to-end performance \cite{mehrabi2021survey, oymak2019overparameterized}.
As learning-driven processes grow more widespread, guaranteeing \textit{fairness across various nodes} in terms of latency, throughput, and energy consumption has arisen as an important yet unsolvable challenge.   Resource-rich nodes often dominate learning and routing decisions, putting weaker or remote devices behind \cite{barocas2023fairness}.   Current optimization frameworks frequently use \textit{Euclidean space}, which does not accurately represent the hierarchical and curved nature of real-world network topologies \cite{chen2024toward}. These limitations highlight the need for learning frameworks that are geometry-aware and fair, enabling equitable and efficient routing across \textcolor{black}{decentralized edge-level learning under a federated aggregation framework.}

Traditional routing systems, such as OSPF, BGP, and reinforcement-learning extensions, focus on network-wide efficiency, aiming to minimize latency or improve throughput \cite{bogyrbayeva2024machine, liu2021deep}.   These techniques provide excellent aggregate performance, but favor well-connected or resource-rich nodes, resulting in \textit{unfair service allocation} in heterogeneous edge environments \cite{wang2023survey}.  Recent research has offered federated routing and distributed reinforcement-learning frameworks, which allow for local policy modifications without centralizing traffic data, enhancing scalability and privacy \cite{cong2023soho, zhang2024unified, liu2024federated}. However, these algorithms presume statistically homogeneous clients and minimize variances in node capacity, connection, and energy budgets, resulting in \textcolor{black}{non-independent and identically distributed (non-IID) learning behavior
} and biased routing decisions.  In parallel, \textit{graph neural-network-based} routing models have emerged to take use of network topologies' structural relationships \cite{xu2024scalable, guang2024graph}.  However, the bulk of these models operate on \textit{Euclidean embedding spaces}, which fail to reflect the hierarchical or power-law connection patterns found in real-world networks \cite{chen2024toward}.  As a result, they are unable to establish geometric connections, which are required for balancing traffic loads across distant or hierarchically organized sub-networks.   These challenges underscore the necessity for a paradigm that tackles both \textit{topological geometry and fairness} in a federated, large-scale routing scenario.

\textbf{Main Contributions:} To address such challenges, this paper offers \textit{Geo-FairFed}, a federated routing architecture for large-scale edge networks that considers geometric fairness.   The main objective is to embed the network topology in a \textit{hyperbolic manifold}, allowing each node to learn structural connections that better represent hierarchical and asymmetric connectivity.   In this geometric space, each edge device trains a routing policy with a \textit{hyperbolic graph neural network (HGNN)}, while a global federated aggregator combines model updates using a \textit{fairness-constrained objective} that balances routing efficiency and equity \cite{xu2024scalable}.   Geo-FairFed enables equitable routing decisions among heterogeneous nodes by integrating geometric representation learning and fairness-aware optimization without affecting overall network performance. The main contributions of this paper are summarized as follows.

\begin{itemize}
    \item We introduce a \textit{hyperbolic graph neural network (HGNN)} to learn topology-aware representations of network nodes and links.   This enables the efficient encoding of hierarchical and asymmetric patterns commonly encountered in large-scale edge networks.
    \item We develop a \textit{federated routing method} that incorporates a fairness criteria throughout the aggregation process.   The design balances routing performance among varied edge devices by including Jain's fairness index \cite{rezaeinia2023efficiency} into the global objective.
    \item We propose a \textit{unified optimization strategy} that minimizes routing delay, geometric inconsistency, and injustice.   We also give a theoretical study that shows convergence under limited curvature and fairness-regularized stability.
    \item We conduct comprehensive simulations on dynamic 6G and IoT topologies, and the proposed Geo-FairFed framework delivers reduced latency (20\%), energy consumption (17\%), and higher fairness than state-of-the-art federated and geometric routing benchmarks.
\end{itemize}

\section{Related Works}

\subsection{Fairness in Networking}
Fairness has long been a key factor in network design, ensuring that resources such as bandwidth, transmission power, and latency are properly distributed across users.   Early research offered the max-min and proportional fairness concepts for wired and wireless networks, which tried to balance throughput and latency under shared connection constraints \cite{wang2023survey}.   The Jain's fairness index has become a widely accepted quantitative method for analyzing network performance equality across several nodes \cite{rezaeinia2023efficiency}.   Subsequent attempts extended these ideas to include QoS-oriented fairness and energy-efficient fairness, with the goal of preventing network shortage or bias toward high-capacity nodes \cite{liu2021deep}.
Recently, fairness has been included into reinforcement-learning-based routing and congestion control, with rules learned to equalize queueing delays or packet delivery rates \cite{zhang2024unified}.   However, these systems usually rely on centralized training and fail to scale in heterogeneous edge environments where nodes differ in computation, mobility, and link quality.  

\subsection{Federated Optimization and Edge Routing}
Federated learning (FL) is a scalable approach for distributed model training over heterogeneous edge devices, enabling local computation without transferring raw data \cite{cong2023soho}.  In networking, FL has been researched for adaptive routing, traffic prediction, and congestion management. Each node or cluster trains a local policy based on its own traffic statistics and frequently contributes to a global model \cite{zhang2024unified}. However, these strategies are primarily concerned with global accuracy and convergence, ignoring the fairness of participating nodes.
Clients on large-scale edge and 6G networks usually differ in bandwidth, mobility, and computing capability, leading in biased aggregation in which powerful nodes have a disproportionate influence on global routing strategy.   Furthermore, federated routing models frequently operate on Euclidean feature spaces, ignoring network topologies' underlying geometric links and hierarchical organization \cite{chen2024toward}.   Several studies have explored fairness-aware aggregation algorithms in FL for applications such as categorization or healthcare \cite{zhang2024unified}; however, they have not been extended to network-level fairness, where routing decisions have a direct impact on link use and latency distribution \cite{liu2024federated}.   As a result, traditional federated optimization algorithms are insufficient for providing topology-aware and equitable routing performance in a variety of edge configurations.

\subsection{Geometric and Graph Learning for Network Control}
Graph-based learning has recently emerged as a significant tool for modeling networked systems, as it combines spatial and relational links that standard neural networks ignore \cite{guang2024graph}.   Graph Neural Networks (GNNs) have been used for routing, topology prediction, and traffic engineering, with nodes sharing hidden STATEs to represent connection patterns.   These models have demonstrated impressive generalization to previously unknown topologies and dynamic conditions, outperforming rule-based routing protocols \cite{xu2024scalable}.   However, most GNNs use Euclidean embeddings, which assume flat geometry and fail to accurately represent the hierarchical or power-law patterns inherent in large-scale communication networks.
To overcome this issue, geometric deep learning has proposed non-Euclidean embeddings, such as hyperbolic and spherical spaces, that better represent hierarchical, clustered, and asymmetric topologies.   Hyperbolic graph embeddings have shown superior performance in modeling Internet-scale networks, knowledge graphs, and wireless topologies with latent hierarchies \cite{chen2024toward}.   Despite these benefits, few research have looked into incorporating geometric representation learning into routing algorithms, and even fewer have explored its relationship with federated optimization or fairness constraints to balance routing efficiency or promote fairness among various nodes. 

\textbf{Research Gap.}
In summary, past research 
have largely evolved in isolation.   Fairness approaches in networking tend to be centered on static allocation or centralized scheduling, with minimal flexibility for changing topologies. Federated optimization approaches enable scalability and anonymity, but they frequently fail to ensure equitable involvement across heterogeneous nodes.
Meanwhile, geometric and graph-based learning accurately represent network hierarchies, but they do not account for fairness or aggregation bias in distributed training.   As a result, no single approach guarantees topological awareness, federated scalability, and equitable performance distribution.  

\section{System Model and Problem Formulation}

\subsection{Network Model and Notation}
\textcolor{black}{Consider a communication network as a weighted graph $G = (V, \mathcal{E})$, where $V = \{v_1, v_2, \ldots, v_N\}$ represents the set of $N$ edge nodes and $\mathcal{E} \subseteq V \times V$ represents the set of communication connections. 
 Each connection $(i,j) \in \mathcal{E}$ is identified by its bandwidth $b_{ij}$, propagation delay $d_{ij}$, and transmission energy cost $e_{ij}$. }
 The adjacency matrix $A \in \mathbb{R}^{N \times N}$ represents node connectivity. Each node $v_i$ has a feature vector $\mathbf{x}_i \in \mathbb{R}^m$ that describes its local state (e.g., queue length, packet arrival rate, link usage, or channel condition).

{\color{black} Each $v_i \in V$ represents a network edge node, which is identified by a local feature vector $x_i$, develops a local routing strategy $f_i(\mathbf{x}_i; \theta_i)$ that maps their neighborhood STATE to a next-hop selection or forwarding probability distribution. All nodes initialize their local parameters from the same global model at the start of training, i.e., $\theta_i^{(0)} = \Theta^{(0)}$ for all $i$. Each node then updates $\theta_i^{(t)}$ locally and helps form the global parameters $\Theta^{(t+1)}$ at each communication round.} 
 Every $T$ communication rounds, the nodes exchange data with a central aggregator (for example, an edge server). 
 To express hierarchical linkages and heterogeneous connectivity, node embeddings are represented in a negatively curved manifold $\mathbb{H}^d$ with a curvature value $\kappa < 0$. 
 Let $\mathbf{h}_i = \phi(\mathbf{x}_i; \theta_i) \in \mathbb{H}^d$ represent the embedding of node $v_i$ derived from an HGNN encoder $\phi(\cdot)$.

\subsection{Local Learning Objective}
Every node locally trains its routing model by minimizing a composite loss function as
\begin{equation}
\mathcal{L}_i^{\text{local}} 
= \alpha \, \mathcal{L}_i^{\text{delay}} 
+ \beta \, \mathcal{L}_i^{\text{energy}} 
+ \gamma \, \mathcal{L}_i^{\text{drop}},
\end{equation}
where $\mathcal{L}_i^{\text{delay}}$ represents the average end-to-end latency for flows passing through node. The variables $v_i$, $\mathcal{L}_i^{\text{energy}}$, and $\mathcal{L}_i^{\text{drop}}$ represent the energy consumed per packet and the local packet loss rate. 
The coefficients $\alpha, \beta, \gamma \geq 0$ balance the objectives based on network priority and traffic conditions. \textcolor{black}{Although the network topology remains constant over short time intervals, local losses vary between epochs due to stochastic traffic arrivals, evolving queue states, and time-varying link utilization within each observation window, requiring multiple local optimization epochs over the collected mini-batch of routing samples.}

\subsection{Fairness Definition}
To ensure equitable routing performance among diverse nodes, we use Jain's fairness index \cite{rezaeinia2023efficiency}, which is defined as
\begin{equation}
\mathcal{F}(\boldsymbol{\eta}) 
= \frac{\left(\sum_{i=1}^{N} \eta_i \right)^2}
{N \sum_{i=1}^{N} \eta_i^2},
\end{equation}
where $\eta_i$ represents a performance statistic for node $v_i$, such throughput or inverse latency. 
 The fairness loss is defined as $\mathcal{L}_{\text{fair}} = 1 - \mathcal{F}(\boldsymbol{\eta})$. Lower values indicate more balanced performance across nodes. \textcolor{black}{The proposed structure is not restricted to any particular measure, however this study uses Jain's fairness index because of its analytical ease and smoothness. By changing the fairness loss term in the global goal, other fairness concepts like max--min fairness, proportional fairness, or application-level QoE-based fairness may be included. Because of its adaptability, Geo-FairFed may support various fairness goals based on service-level priorities and deployment requirements.}

\subsection{Global Optimization Problem}
The objective is to reduce routing costs, unfairness, and geometric distortion for all participating nodes. 
 Formally, the global objective is provided as
\begin{equation}
\min_{\Theta}
\left[
\mathcal{L}_{\text{routing}}(\Theta)
+ \lambda_1 \mathcal{L}_{\text{fair}}(\Theta)
+ \lambda_2 \mathcal{L}_{\text{geo}}(\Theta)
\right],
\label{eq:global}
\end{equation}
where $\theta = \{\theta_1, \ldots, \theta_N\}. $ denotes all model parameters: $\mathcal{L}_{\text{routing}} = \frac{1}{N} \sum_{i=1}^{N} \mathcal{L}_i^{\text{local}}$ represents the average routing loss, $\mathcal{L}_{\text{fair}}$ penalizes inequality across nodes, and $\mathcal{L}_{\text{geo}}$ ensures manifold consistency to maintain curvature information across local and global embeddings. 
 The trade-off coefficients $\lambda_1, \lambda_2 > 0$ determine the strength of the fairness and geometric regularization terms, respectively.

\begin{figure}
    \centering
    \includegraphics[width=0.99\linewidth]{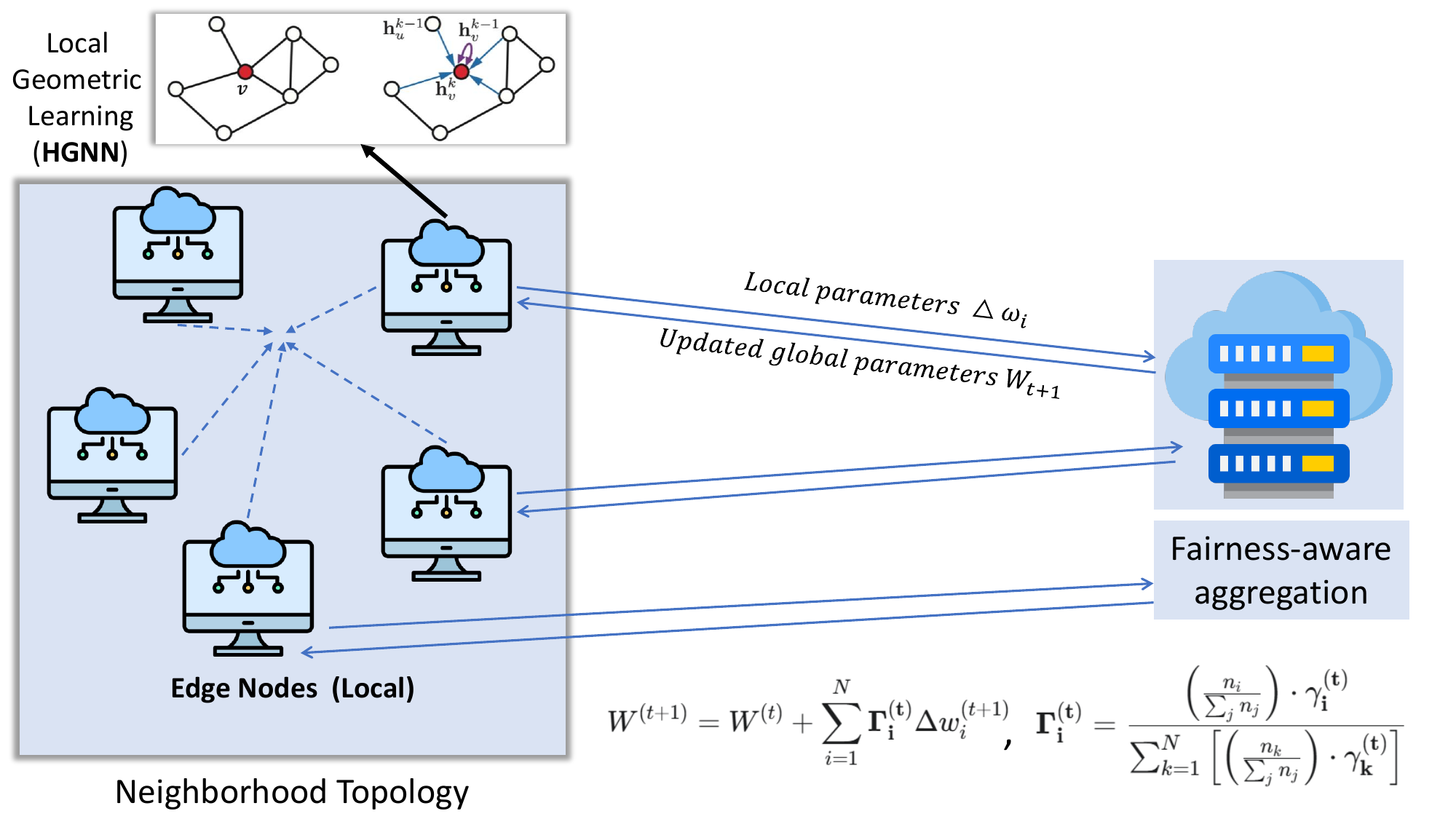}
    \caption{Overview of the proposed Geo-FairFed framework, including HGNN, model update, and fairness-aware aggregation.}
    \label{fig:architecture}
\end{figure}

Each node's routing decision is further limited with
\begin{align}
\color{black}
0 \leq p_i \leq p_i^{\max}, \quad 
d_{ij} \leq d_{\max}, \quad 
E_i \leq E_i^{\max}, \quad \forall (i,j) \in \mathcal{E},
\end{align}
where $p_i$ represents transmission power, $E_i$ denotes the local energy
consumption, and $d_{\max}$ is the maximum permitted delay per connection.


\section{Proposed Geo-FairFed Framework}

\subsection{Overall Framework Overview}
As shown in Fig.~\ref{fig:architecture}, Geo-FairFed operates in iterative communication rounds with three primary stages:
(i) \textbf{Local geometric routing learning}, involves each edge node training a hyperbolic graph neural network (HGNN) based on its neighborhood topology and traffic information;
(ii) \textbf{Model update exchange}, involves nodes sending just parameter updates to the central aggregator, without sharing raw data; and
(iii) \textbf{Fairness-aware aggregation}, involves updating the global model using a weighted combination of local updates based on fairness coefficients. 


\subsection{Geometric Routing Embedding}
Each node represents its local neighborhood in a latent representation in a negatively curved hyperbolic space $\mathbb{H}^d$. Let $\mathbf{h}_i^{(l)} \in \mathbb{H}^d$ represent the embedding of node $v_i$ at layer $l$. Message passing in the HGNN layer operates as
\begin{equation}
\mathbf{h}_i^{(l+1)} 
= \text{MLP}^{\mathbb{H}} \left(
\text{Agg}^{\mathbb{H}}\left(
\{\exp_0^{-1}(\mathbf{h}_j^{(l)}) : j \in \mathcal{N}(i)\}
\right)
\right),
\end{equation}
where $\exp_0^{-1}(\cdot)$ maps points from the manifold to the tangent space at the origin, whereas $\text{Agg}^{\mathbb{H}}(\cdot) $ represents a hyperbolic aggregation operator (e.g., Möbius addition).
 The embedding geometry is controlled by the curvature parameter $\kappa < 0$, with greater values capturing stronger hierarchical links. 
 The local geometric loss penalizes deviations from manifold consistency as
\begin{equation}
\mathcal{L}_{\text{geo}} = \sum_{(i,j) \in \mathcal{E}}
\left|  d_{\mathbb{H}}(\mathbf{h}_i, \mathbf{h}_j) 
- d_{ij} \right|,
\end{equation}
where $d_{\mathbb{H}}(\cdot,\cdot)$ denotes the geodesic distance in $\mathbb{H}^d$, and $d_{ij}$ represents the observed link latency.

\subsection{Fairness-Aware Federated Aggregation}
After each local training phase, the model parameters $\theta_i$ are submitted to the central aggregator. 
 A standard FedAvg update combines local models as
\begin{equation}
\theta^{(t+1)} = \sum_{i=1}^{N} w_i \, \theta_i^{(t)},
\label{eq:fedavg}
\end{equation}
where $w_i = n_i / \sum_j n_j$, where $n_i$ represents the number of samples at node $i$. 
 Geo-FairFed promotes fairness by replacing $w_i$ with a fairness-adjusted weight $\tilde{w}_i$ described as
\begin{equation}
\tilde{w}_i = 
\frac{(1 - \rho_i) \, n_i}
{\sum_j (1 - \rho_j) \, n_j},
\quad 
\rho_i = 1 - \mathcal{F}(\eta_i),
\end{equation}
where $\mathcal{F}(\eta_i)$ is the local Jain's fairness index for node~$i$.
 Nodes with higher inequality (bigger $\rho_i$) receive lower aggregate weights, preventing dominating nodes from biasing the global model.

\subsection{Joint Optimization Objective}
The global training objective combines routing efficiency, fairness, and geometric consistency as
\begin{equation}
\min_{\Theta} \mathcal{L}_{\text{routing}} 
+ \lambda_1 \, \mathcal{L}_{\text{fair}} 
+ \lambda_2 \, \mathcal{L}_{\text{geo}},
\label{eq:joint}
\end{equation}
where the mean local routing loss is denoted by $\mathcal{L}_{\text{routing}}$, and $\mathcal{L}_{\text{fair}} = 1 - \mathcal{F}(\boldsymbol{\eta})$ and $\mathcal{L}_{\text{geo}}$ are defined above. The parameters $\lambda_1$ and $\lambda_2$ determine the balance between fairness and geometric regularization.

\subsection{Algorithm Description}
\textcolor{black}{The training process is summarized in Algorithm~\ref{alg:geofairfed}. 
 Edge nodes perform $E_{\text{loc}}$ local epochs of HGNN-based training, followed by federated aggregation using fairness-adjusted weights.
 Geo-FairFed starts by setting up global model parameters \eqref{algline:init}.  
 Each communication round, each client computes \emph{hyperbolic embeddings} of its local neighborhood \eqref{algline:embed} and executes $E_{\text{loc}}$ local optimization epochs, minimizing the routing and geometry-aware loss \eqref{algline:localupdate}.}
After local training, the client measures its performance metric $\eta_i$ and generates a fairness score $\rho_i$ based on Jain's index \eqref{algline:fairscore}. This score, along with model parameters and sample counts, is uploaded to the server \eqref{algline:upload}.  
 The server computes fairness-adjusted aggregation weights $\tilde{w}_i$ \eqref{algline:weights}, creates the new global model using a weighted average \eqref{algline:aggregate}, and broadcasts the updated parameters to all participants \eqref{algline:broadcast}.  
 These processes are repeated for $T$ rounds until we get a fairness-aware equilibrium routing policy. \textcolor{black}{In this case, the parameter $T$ indicates the length of the local traffic observation window used to gather routing samples, and the index $t$ indicates a federated communication round rather than a physical time unit.}

\begin{algorithm}[t]
\caption{Geo-FairFed: Geometric Fairness-Aware Federated Routing}
\label{alg:geofairfed}
\small
\begin{algorithmic}[1]
\STATE \textbf{Input:} clients $V$, communication rounds $T$, local epochs $E_{\text{loc}}$, curvature $\kappa<0$, coefficients $\lambda_1,\lambda_2$, learning rate $\eta$
\STATE \textbf{Initialize:} global parameters $\Theta^{(0)}$; set $\theta_i^{(0)} = \Theta^{(0)}$ for all $i$ \label{algline:init}
\FOR{$t = 0,1,\dots,T-1$}
  \FORALL{client $i \in V$ \textbf{in parallel}}
    \STATE \textbf{Hyperbolic Encoding:} compute HGNN embeddings $\mathbf{h}_i  \leftarrow  \phi(\mathbf{x}_i;\theta_i^{(t)},\kappa)$ \label{algline:embed}
    \FOR{$e = 1$ \TO $E_{\text{loc}}$}
        \STATE \textbf{Local Update:} $\theta_i^{(t)}  \leftarrow  \theta_i^{(t)} - \eta \nabla_{\theta} \big(\mathcal{L}_i^{\text{local}} + \lambda_2 \mathcal{L}_{\text{geo}}\big)$ \label{algline:localupdate}
    \ENDFOR
    \STATE \textbf{Compute Fairness:} evaluate node metric $\eta_i$ and local fairness index $\mathcal{F}(\eta_i)$; set $\rho_i  \leftarrow  1 - \mathcal{F}(\eta_i)$ \label{algline:fairscore}
    \STATE \textbf{Upload:} send $(\theta_i^{(t)}, \rho_i, n_i)$ to the server \label{algline:upload}
  \ENDFOR
  \STATE \textbf{Fairness-Adjusted Weights:} $\tilde{w}_i  \leftarrow  \dfrac{(1-\rho_i)\,n_i}{\sum_{j\in V}(1-\rho_j)\,n_j}$ \label{algline:weights}
  \STATE \textbf{Aggregation:} $\theta^{(t+1)}  \leftarrow  \sum_{i\in V} \tilde{w}_i\,\theta_i^{(t)}$ \label{algline:aggregate}
  \STATE \textbf{Broadcast:} send $\theta^{(t+1)}$ to all clients \label{algline:broadcast}
\ENDFOR
\STATE \textbf{Output:} fairness-aware global routing model $\theta^{(T)}$
\end{algorithmic}
\end{algorithm}

\subsection{Complexity Analysis}
The computational complexity for each client is $O(|E_i| d L)$, where $|E_i|$ is the number of local edges, $d$ is the embedding size, and $L$ is the number of HGNN layers. The communication cost per round is $O(|\Theta| p)$, where $p$ represents the client participation percentage and $|\Theta|$ is the model size. Fairness regularization incurs a negligible $O(N)$ overhead while computing $\rho_i$ values.

\textbf{Summary.}
Geo-FairFed enhances routing efficiency, fairness, and geometric consistency via distributed HGNN training and fairness-aware aggregation. 
 The next part outlines the experimental setup and assessment criteria used to evaluate performance in realistic 6G and IoT network scenarios.
\textcolor{black}{It is possible to analytically define the absolute communication overhead per round in addition to normalized complexity. Let $N$ be the number of participating nodes and $d$ be the number of model parameters. A communication cost of about $2Nd$ scalar values is incurred every round when each node sends its local model parameters to the aggregator and gets the updated global model. Bytes each communication round, assuming 32-bit floating-point format, are $8Nd$. This approach enhances the normalized scalability trends in Fig.~6 and enables direct estimate of communication overhead in real-world deployments.}

\section{Theoretical Analysis}


\begin{assumption}[Smoothness]
\label{asm:smoothness}
The local loss functions $\mathcal{L}_i^{\text{local}}(\theta_i)$ are $L$-smooth, meaning that for each $\theta_i, \theta_i'$,
\begin{equation}
||\nabla \mathcal{L}_i^{\text{local}}(\theta_i) - \nabla \mathcal{L}_i^{\text{local}}(\theta_i')||
\leq L ||\theta_i - \theta_i'||.
\end{equation}
This ensures that the gradient does not abruptly change, resulting in stability in local optimization.
\end{assumption}

\begin{assumption}[Bounded Gradient Variance]
\label{asm:variance}
The stochastic gradients used for local updates have bounded variance, i.e.,
\begin{equation}
\mathbb{E}||\nabla \mathcal{L}_i^{\text{local}} - \nabla \mathcal{L}_{\text{routing}}||^2
\leq \sigma^2, \quad \forall i.
\end{equation}
This is a standard assumption in federated optimization that accommodates for data heterogeneity.
\end{assumption}

\begin{assumption}[Bounded Curvature]
\label{asm:curvature}
The hyperbolic manifold $\mathbb{H}^d$ has a curvature parameter $\kappa$ with $|\kappa| < \kappa_{\max}$. The exponential and logarithmic mappings $\exp_0(\cdot)$ and $\exp_0^{-1}(\cdot)$ are bi-Lipschitz continuous zx
\begin{equation}
c_1 ||u-v|| \le d_{\mathbb{H}}(\exp_0(u), \exp_0(v)) \le c_2 ||u-v||,
\end{equation}
where $d_{\mathbb{H}}(\cdot,\cdot)$ indicates the geodesic distance.  
 This ensures that the curvature does not skew gradients significantly.
\end{assumption}

\begin{assumption}[Fairness Regularity]
\label{asm:fairness}
The fairness loss $\mathcal{L}_{\text{fair}}(\Theta)$ is convex and continuously differentiable with respect to $\Theta$. Its gradient is bounded by a constant $C_{\text{fair}}$, i.e., $||\nabla \mathcal{L}_{\text{fair}}|| \le C_{\text{fair}}$.  
 This guarantees that the fairness constraint affects the optimization process smoothly.
\end{assumption}


\begin{remark}
Assumption~\ref{asm:curvature} is relatively small because most real-world network topologies have weakly negative curvature ($|\kappa| < 1$).  
 Assumption~\ref{asm:fairness} is consistent with empirical fairness penalties that are differentiable and Lipschitz-bounded, making these criteria applicable to large-scale edge deployments.
\end{remark}

{\color{black}
\noindent\textbf{Practical considerations.}
Convergence guarantees in federated and geometric optimization are established using the basic analytical requirements mentioned above. Although non-stationary traffic, topological changes, or adversarial behavior may be present in real-world edge environments, these assumptions are mostly used to facilitate theoretical analysis rather than to limit practical deployment. The proposed structure exhibits resilience beyond idealized analytical circumstances, as evidenced by the empirical assessment (Section VI), which shows that it retains stable performance under realistic network variability.
}

\subsection{Convergence of Local Training}
\textcolor{black}{We investigate the descent behavior of a single client $i$ over $E_{\text{loc}}$ local epochs in round $t$.} Let $\theta_i^{(t,e)}$ represent the local parameters after $e \in \{0,\dots,E_{\text{loc}}\}$ epochs inside round $t$ (with $\theta_i^{(t,0)} = \theta^{(t)}$ the received global model). To clarify, we write $\mathcal{L}_i \equiv \mathcal{L}_i^{\text{local}} + \lambda_2 \mathcal{L}_{\text{geo}}$ as the local objective minimized at the client (see Line~\eqref{algline:localupdate}). 
 Local updates in the tangent space are achieved using logarithmic/exponential maps with curvature $\kappa < 0$:
\[
\tilde{\theta}_i^{(t,e+1)}  =  \theta_i^{(t,e)}  -  \eta\, g_i^{(t,e)}, 
\quad 
\theta_i^{(t,e+1)}  =  \exp_0 \big(\,\tilde{\theta}_i^{(t,e+1)}\,\big),
\]
where $g_i^{(t,e)}$ is an unbiased stochastic gradient estimator of $\nabla \mathcal{L}_i(\theta_i^{(t,e)})$ calculated in the tangent space at the origin (using parallel transport if required).

\begin{lemma}[One-Step Local Descent]
\label{lem:local_descent}
Let Assumptions~\ref{asm:smoothness}--\ref{asm:curvature} hold and let $\eta \in (0,\frac{1}{L}]. $. 
 Assume $\mathbb{E}[g_i^{(t,e)}] = \nabla \mathcal{L}_i(\theta_i^{(t,e)})$ and $\mathbb{E}||g_i^{(t,e)} - \nabla \mathcal{L}_i(\theta_i^{(t,e)})||^2 \le \sigma^2$. 
 The predicted local objective matches
\begin{equation}
\label{eq:one_step_descent}
\begin{aligned}
    &\mathbb{E}\big[\mathcal{L}_i(\theta_i^{(t,e+1)})\big]
 \le 
\\&\mathbb{E}\big[\mathcal{L}_i(\theta_i^{(t,e)})\big]
- \frac{\eta}{2}\,
\mathbb{E}\big[||\nabla \mathcal{L}_i(\theta_i^{(t,e)})||^2\big]
+ \frac{L\eta^2}{2}\,\sigma^2
+ C_{\kappa}\,\eta^2,
\end{aligned}
\end{equation}
where $C_{\kappa} = \tfrac{L}{2}(c_2^2 - 1)\,G^2$ is a curvature-induced constant that depends on the bi-Lipschitz factor $c_2$ in Assumption~\ref{asm:curvature} and an upper bound $G$ on gradient norms.
\end{lemma}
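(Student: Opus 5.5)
The plan is to run the standard smoothness-based descent argument for SGD, applied to the tangent-space step $\tilde\theta_i^{(t,e+1)}$, and then pay separately for the exponential-map retraction using the bi-Lipschitz bound of Assumption~\ref{asm:curvature}. Throughout, I treat $\mathcal{L}_i=\mathcal{L}_i^{\text{local}}+\lambda_2\mathcal{L}_{\text{geo}}$ as $L$-smooth. Assumption~\ref{asm:smoothness} only covers $\mathcal{L}_i^{\text{local}}$, and $\mathcal{L}_{\text{geo}}$ contains absolute values, so this needs an explicit remark: either absorb $\lambda_2\mathcal{L}_{\text{geo}}$ into the smoothness constant on the region where the geodesic residuals are nonzero, or use a smoothed version of it. I also assume the gradient bound $\|g_i^{(t,e)}\|\le G$ that the statement invokes through the constant $C_\kappa$.

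\textbf{Step 1 (tangent-space descent).} Condition on $\theta_i^{(t,e)}$ and apply the descent lemma to $\tilde\theta=\theta-\eta g$:
\[
\mathcal{L}_i(\tilde\theta)\le \mathcal{L}_i(\theta)-\eta\langle\nabla\mathcal{L}_i(\theta),g\rangle+\tfrac{L\eta^2}{2}\|g\|^2 .
\]
Take the conditional expectation. Unbiasedness gives $\mathbb{E}\langle\nabla\mathcal{L}_i,g\rangle=\|\nabla\mathcal{L}_i\|^2$. The bias-variance split gives $\mathbb{E}\|g\|^2\le\|\nabla\mathcal{L}_i\|^2+\sigma^2$. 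Hence
\[
\mathbb{E}\,\mathcal{L}_i(\tilde\theta)\le\mathcal{L}_i(\theta)-\eta\big(1-\tfrac{L\eta}{2}\big)\|\nabla\mathcal{L}_i\|^2+\tfrac{L\eta^2}{2}\sigma^2 .
\]
Since $\eta\le 1/L$, we have $1-L\eta/2\ge 1/2$. This yields exactly the first three terms of \eqref{eq:one_step_descent}.

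\textbf{Step 2 (curvature correction).} Bound $\mathcal{L}_i(\exp_0(\tilde\theta))-\mathcal{L}_i(\tilde\theta)$. The idea is to view the step through the map and compare the actual displacement $\theta^{(t,e+1)}-\theta^{(t,e)}$ with the Euclidean displacement $-\eta g$. By Assumption~\ref{asm:curvature}, the geodesic length of the realized step is at most $c_2\eta\|g\|$, whereas the Euclidean analysis in Step 1 charged only $\eta\|g\|$. Redoing the quadratic term of the descent lemma with the inflated step length changes $\tfrac{L}{2}\eta^2\|g\|^2$ into $\tfrac{L}{2}c_2^2\eta^2\|g\|^2$. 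The excess is $\tfrac{L}{2}(c_2^2-1)\eta^2\|g\|^2\le\tfrac{L}{2}(c_2^2-1)G^2\eta^2=C_\kappa\eta^2$. The first-order term is unaffected because the differential of $\exp_0$ at the base point is the identity, or equivalently because the gradient is parallel-transported, so the linear term still produces $-\eta\|\nabla\mathcal{L}_i\|^2$ in expectation.

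\textbf{Step 3.} Combine the two steps and take total expectation to obtain \eqref{eq:one_step_descent}.

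The main obstacle is Step 2: justifying that the retraction affects only the second-order term and does so by the factor $c_2^2$. My first approach is to apply smoothness along the geodesic joining $\theta^{(t,e)}$ and $\theta^{(t,e+1)}$, using the pulled-back (Riemannian) version of the descent lemma. In that version the squared step length is the geodesic distance squared, and this is bounded by $c_2^2\eta^2\|g\|^2$. The bookkeeping is then clean, provided the gradient is expressed in the same tangent frame as the step. If that fails, I would instead impose the needed first-order compatibility as an implicit modeling assumption: $\mathcal{L}_i$ is smooth with respect to the chart induced by $\exp_0$, and $\exp_0$ is identity-like to first order at the origin.
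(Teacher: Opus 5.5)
Your proposal follows the paper's proof step for step. Step~1 is the paper's tangent-space SGD descent (unbiasedness, the bias--variance split, and $1-L\eta/2\ge\tfrac12$), and Step~2 is its retraction correction $\tfrac{L}{2}(c_2^2-1)\eta^2\|g_i^{(t,e)}\|^2\le C_\kappa\eta^2$ read off from the bi-Lipschitz constant $c_2$, which the paper asserts rather than derives. Of your two justifications for Step~2, only the fallback holds up---$D\exp_0$ is the identity only at $0$, not at $\log_0\theta_i^{(t,e)}$ where the step is taken, and the realized displacement $\log_{\theta_i^{(t,e)}}\theta_i^{(t,e+1)}$ is nonlinear in $g_i^{(t,e)}$, so unbiasedness does not pass through it and the geodesic route leaves an $O(\eta^2\sigma^2\|\nabla\mathcal{L}_i\|)$ curvature term that $C_\kappa$ does not account for---whereas reading the smoothness assumption as $L$-smoothness of $\mathcal{L}_i\circ\exp_0$ on $T_0\mathbb{H}^d$ (with $g_i^{(t,e)}$ unbiased for that chart gradient, matching the paper's identification of $\theta_i^{(t,e)}$ with $\log_0\theta_i^{(t,e)}$) makes $\mathcal{L}_i(\theta_i^{(t,e+1)})$ exactly the quantity bounded in Step~1, so the lemma holds with $C_\kappa\eta^2\ge 0$ (as $c_2\ge 1$) as pure slack.
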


\begin{proof}
The $L$-smoothness of $\mathcal{L}_i$ in Assumption~\ref{asm:smoothness} implies that for any vectors $u,v$ in the \emph{same} space,
\[
\mathcal{L}_i(v) \le \mathcal{L}_i(u) 
+ \langle \nabla \mathcal{L}_i(u), v-u \rangle 
+ \frac{L}{2}||v-u||^2.
\]
We use this inequality in the \emph{tangent} space around the present point. 
 Let $u = \theta_i^{(t,e)}$ (identified with $\log_0(\theta_i^{(t,e)})$) and determine the trial point $\tilde{v} = \tilde{\theta}_i^{(t,e+1)} = \theta_i^{(t,e)} - \eta g_i^{(t,e)}$. 
 We first get a constraint for the tangent-space update, and then use $\exp_0(\cdot)$ to regulate the distortion when mapping back to the manifold.

\textbf{(i) Tangent step.)} By smoothness,
\begin{equation}
    \begin{aligned}
        &\mathcal{L}_i(\tilde{\theta}_i^{(t,e+1)}) 
\le \\&\mathcal{L}_i(\theta_i^{(t,e)}) 
+ \left\langle \nabla \mathcal{L}_i(\theta_i^{(t,e)}), -\eta\, g_i^{(t,e)} \right\rangle
+ \frac{L}{2}\,\eta^2 ||g_i^{(t,e)}||^2.
    \end{aligned}
\end{equation}
Using $\mathbb{E}[g_i^{(t,e)}]$ to compute conditional expectation with respect to local randomness at epoch $\mathbb{E}[g_i^{(t,e)}] = \nabla \mathcal{L}_i(\theta_i^{(t,e)})$, we get
\begin{equation}
    \begin{aligned}
        &\mathbb{E} \left[\mathcal{L}_i(\tilde{\theta}_i^{(t,e+1)}) \mid \theta_i^{(t,e)}\right]
\le \\&\mathcal{L}_i(\theta_i^{(t,e)}) 
- \eta ||\nabla \mathcal{L}_i(\theta_i^{(t,e)})||^2
+ \frac{L}{2}\eta^2\,\mathbb{E}||g_i^{(t,e)}||^2.
    \end{aligned}
\end{equation}
Breaking down $\mathbb{E}||g_i^{(t,e)}||^2 = ||\nabla \mathcal{L}_i(\theta_i^{(t,e)})||^2 + \mathbb{E}||g_i^{(t,e)} - \nabla \mathcal{L}_i||^2 \le ||\nabla \mathcal{L}_i||^2 + \sigma^2$, we obtain
\begin{equation}
    \begin{aligned}
        &\mathbb{E} \left[\mathcal{L}_i(\tilde{\theta}_i^{(t,e+1)}) \mid \theta_i^{(t,e)}\right]
\le \\&\mathcal{L}_i(\theta_i^{(t,e)}) 
- \eta \Big(1 - \tfrac{L\eta}{2}\Big)||\nabla \mathcal{L}_i||^2
+ \frac{L}{2}\eta^2 \sigma^2.
    \end{aligned}
\end{equation}
Given that $\eta\le \tfrac{1}{L}$, we have $1 - \tfrac{L\eta}{2}\ge \tfrac{1}{2}$. Therefore,
\begin{equation}
\label{eq:tangent_decrease}
\begin{aligned}
    &\mathbb{E} \left[\mathcal{L}_i(\tilde{\theta}_i^{(t,e+1)}) \mid \theta_i^{(t,e)}\right]
\le \\&\mathcal{L}_i(\theta_i^{(t,e)}) 
- \frac{\eta}{2}||\nabla \mathcal{L}_i(\theta_i^{(t,e)})||^2
+ \frac{L}{2}\eta^2 \sigma^2.
\end{aligned}
\end{equation}

\textbf{(ii) Mapping back to the manifold.)} 
We now associate $\mathcal{L}_i(\theta_i^{(t,e+1)}) = \mathcal{L}_i(\exp_0(\tilde{\theta}_i^{(t,e+1)})). $to $\mathcal{L}_i(\tilde{\theta}_i^{(t,e+1)})$. 
 The exponential map's bi-Lipschitz property (Assumption~\ref{asm:curvature}) limits fluctuations measured in the manifold to no more than $c_2$ times those in the tangent space. 
 Using typical smoothness arguments with retraction maps (considering $\exp_0$ as a retraction), the incurred distortion per step can be bounded by
\begin{equation}
    \Delta_{\kappa} 
 \le  \frac{L}{2}\,(c_2^2 - 1)\,||\tilde{\theta}_i^{(t,e+1)} - \theta_i^{(t,e)}||^2 
= \frac{L}{2}\,(c_2^2 - 1)\,\eta^2 ||g_i^{(t,e)}||^2.
\end{equation}
Using expectation and $\mathbb{E}||g_i^{(t,e)}||^2 \le G^2$, we get $\mathbb{E}[\Delta_{\kappa}] \le C_{\kappa}\eta^2$, where $C_{\kappa} = \tfrac{L}{2}(c_2^2-1)G^2$.

Combining \eqref{eq:tangent_decrease} with the curvature term produces \eqref{eq:one_step_descent} after eliminating the conditioning and taking total expectation.
\end{proof}

\begin{corollary}[Multi-Epoch Local Descent]
\label{cor:multi_epoch_descent}
Based on Lemma~\ref{lem:local_descent}, for $E_{\text{loc}}$ local epochs, we have
\begin{equation}
    \begin{aligned}
        \mathbb{E}\big[\mathcal{L}_i(\theta_i^{(t,E_{\text{loc}})})\big]
\le  &\mathbb{E}\big[\mathcal{L}_i(\theta_i^{(t,0)})\big]
- \frac{\eta}{2}\sum_{e=0}^{E_{\text{loc}}-1}\mathbb{E}\big[||\nabla \mathcal{L}_i(\theta_i^{(t,e)})||^2\big]
\\&+ E_{\text{loc}}\Big(\frac{L\eta^2}{2}\sigma^2 + C_{\kappa}\eta^2\Big).
    \end{aligned}
\end{equation}
\end{corollary}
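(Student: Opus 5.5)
The plan is to obtain the corollary by telescoping the one-step bound of Lemma~\ref{lem:local_descent} over the $E_{\text{loc}}$ local epochs of round $t$. For each $e \in \{0,\dots,E_{\text{loc}}-1\}$, Lemma~\ref{lem:local_descent} applies directly to the iterate $\theta_i^{(t,e)}$: its hypotheses are the step size $\eta\le 1/L$, unbiasedness and variance $\sigma^2$ of $g_i^{(t,e)}$, and the uniform gradient bound $G$. None of these depends on $e$, so the same constants $L$, $\sigma^2$ and $C_\kappa$ appear at every step. Write the lemma's inequality in total-expectation form as
\[
\mathbb{E}\big[\mathcal{L}_i(\theta_i^{(t,e+1)})\big]-\mathbb{E}\big[\mathcal{L}_i(\theta_i^{(t,e)})\big]
\le -\frac{\eta}{2}\mathbb{E}\big[\|\nabla\mathcal{L}_i(\theta_i^{(t,e)})\|^2\big]+\frac{L\eta^2}{2}\sigma^2+C_\kappa\eta^2 .
\]

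Summing this inequality over $e=0,\dots,E_{\text{loc}}-1$ makes the left-hand side telescope to $\mathbb{E}[\mathcal{L}_i(\theta_i^{(t,E_{\text{loc}})})]-\mathbb{E}[\mathcal{L}_i(\theta_i^{(t,0)})]$. On the right-hand side the gradient terms accumulate into $-\frac{\eta}{2}\sum_{e}\mathbb{E}\|\nabla\mathcal{L}_i(\theta_i^{(t,e)})\|^2$. The constant terms contribute $E_{\text{loc}}$ copies of $\frac{L\eta^2}{2}\sigma^2+C_\kappa\eta^2$. Rearranging then gives exactly the stated bound. An equivalent route is induction on $E_{\text{loc}}$, with the lemma supplying the inductive step.

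The only delicate point is making sure the expectations chain correctly, since each iterate $\theta_i^{(t,e)}$ is random. The lemma is stated in total expectation, but its proof actually conditions on $\theta_i^{(t,e)}$ and then applies the tower property. So I would first check that each conditional one-step bound holds almost surely given the filtration up to epoch $e$. Here I would use that $g_i^{(t,e)}$ is unbiased conditionally on the past, and that the variance bound and the bound $G$ hold conditionally as well. Taking total expectations afterwards justifies adding the $E_{\text{loc}}$ inequalities.

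I would also note that $\theta_i^{(t,0)}=\theta^{(t)}$ is fixed given the round-$t$ history. This means the initial term is well defined whether or not one also conditions on previous rounds, and no additional uniformity assumption is needed.
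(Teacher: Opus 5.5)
Your proof is correct and is essentially the paper's own argument. The paper gives no separate proof; it derives the corollary directly from Lemma~\ref{lem:local_descent}, i.e., by summing the one-step bound over $e=0,\dots,E_{\text{loc}}-1$ and telescoping, exactly as you do. Your added care about conditional unbiasedness and the tower property makes explicit a point the paper leaves implicit, but it does not change the route.
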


\begin{remark}[Interpretation]
The inequality \eqref{eq:one_step_descent} shows a conventional SGD decrease term $-\tfrac{\eta}{2}||\nabla \mathcal{L}_i||^2$ plus two penalties: the stochastic variance term $\tfrac{L\eta^2}{2}\sigma^2$ and a curvature-induced term $C_{\kappa}\eta^2$. 
 Smaller $\eta$ and substantial negative curvature (small $c_2^2 - 1$) negate the curvature penalty, indicating robust HGNN training on $\mathbb{H}^d$.
\end{remark}

\subsection{Convergence of Fairness-Weighted Federated Aggregation}
We investigate the global update made at the server after $E_{\text{loc}}$ local epochs. 
Let $\theta_i^{(t)} \equiv \theta_i^{(t,E_{\text{loc}})}$ be client $i$'s model at the conclusion of round $t$. Define the fairness-weighted aggregation
\(
\theta^{(t+1)}  =  \sum_{i=1}^N \tilde{w}_i^{(t)} \theta_i^{(t)}
\)
with
\(
\tilde{w}_i^{(t)} 
= \frac{(1-\rho_i^{(t)}) n_i}{\sum_{j}(1-\rho_j^{(t)}) n_j}
\),
where $n_i$ represents the local sample count and $\rho_i^{(t)} \in [0,1]$ is the local inequality score (Line~\eqref{algline:fairscore}).

\begin{lemma}[Weight Normalization and Bounded Bias]
\label{lem:weights}
Let $0 \le \rho_i^{(t)} \le \rho_{\max} < 1$ for all $i,t$ and $n_i \in \mathbb{N}_+$, subsequently we get
\begin{enumerate}
\item \textbf{Normalization:} $\sum_{i=1}^N \tilde{w}_i^{(t)} = 1$ and $\tilde{w}_i^{(t)}  \ge  0$. 
\item \textbf{Deviation from FedAvg:} Let the FedAvg weights be $w_i^{\text{FA}} = \frac{n_i}{\sum_j n_j}$. Following that, we obtain
\[
 ||\tilde{\mathbf{w}}^{(t)} - \mathbf{w}^{\text{FA}} ||_1 
\le \frac{2\rho_{\max}}{1-\rho_{\max}},
\]
i.e., the fairness-adjusted mixture maintains a bounded $\ell_1$ distance from FedAvg.
\item \textbf{Bounded Aggregation Drift:} If $ ||\theta_i^{(t)}-\theta^{(t)} || \le \Delta_t$ for all $i$, then
\[
\Big ||
\sum_i \tilde{w}_i^{(t)} \theta_i^{(t)} 
- \sum_i w_i^{\text{FA}} \theta_i^{(t)}
\Big || 
\le \Delta_t \, \frac{2\rho_{\max}}{1-\rho_{\max}}.
\]
\end{enumerate}
\end{lemma}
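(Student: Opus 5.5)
Write $a_i = (1-\rho_i^{(t)})\,n_i$, $S = \sum_j a_j$ and $W = \sum_j n_j$, so that $\tilde{w}_i^{(t)} = a_i/S$ and $w_i^{\text{FA}} = n_i/W$. The proof proves the three items in order, since each uses the previous one.

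\textbf{Normalization.} Because $\rho_i^{(t)} \le \rho_{\max} < 1$ and $n_i \ge 1$, every $a_i$ is strictly positive. Hence $S > 0$, each weight $\tilde{w}_i^{(t)}$ is well defined and nonnegative, and $\sum_i a_i/S = 1$ holds by construction. The same argument gives $\sum_i w_i^{\text{FA}} = 1$, which is needed in the third item.

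\textbf{Deviation from FedAvg.} I would split each coordinate of the difference as
\[
\frac{a_i}{S} - \frac{n_i}{W} = \frac{a_i - n_i}{S} + n_i\Big(\frac{1}{S} - \frac{1}{W}\Big),
\]
and apply the triangle inequality summed over $i$. Two facts bound the pieces.
\begin{itemize}
\item The numerator satisfies $\sum_i |a_i - n_i| = \sum_i \rho_i^{(t)} n_i \le \rho_{\max} W$.
\item The normalizer satisfies $S = W - \sum_i \rho_i^{(t)} n_i \ge (1-\rho_{\max})\,W$.
\end{itemize}
The first piece therefore sums to at most $\rho_{\max} W / \big((1-\rho_{\max})W\big) = \rho_{\max}/(1-\rho_{\max})$. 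For the second piece, $\sum_i n_i\,|1/S - 1/W| = W\,(W-S)/(SW) = (W-S)/S$, which is again at most $\rho_{\max}/(1-\rho_{\max})$. Adding the two bounds gives the claimed $2\rho_{\max}/(1-\rho_{\max})$. The only step needing care is keeping $S$, not $W$, in the denominators. That is exactly why the constant carries the factor $1/(1-\rho_{\max})$.

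\textbf{Bounded aggregation drift.} Both weight vectors sum to one, so $\sum_i (\tilde{w}_i^{(t)} - w_i^{\text{FA}}) = 0$. This lets me subtract the common reference point $\theta^{(t)}$ at no cost:
\[
\sum_i \big(\tilde{w}_i^{(t)} - w_i^{\text{FA}}\big)\theta_i^{(t)} = \sum_i \big(\tilde{w}_i^{(t)} - w_i^{\text{FA}}\big)\big(\theta_i^{(t)} - \theta^{(t)}\big).
\]
Bounding the norm by the triangle inequality gives at most $\Delta_t\,\|\tilde{\mathbf{w}}^{(t)} - \mathbf{w}^{\text{FA}}\|_1$. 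Invoking the second item then yields $\Delta_t \cdot 2\rho_{\max}/(1-\rho_{\max})$. This centering trick is the key idea. Without it the bound would scale with $\|\theta_i^{(t)}\|$ rather than with the drift $\Delta_t$.
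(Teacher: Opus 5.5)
Your proof is correct and is essentially the paper's: in your notation the paper bounds each coordinate by $\frac{(n_i-a_i)W + n_i(W-S)}{SW}$, which is exactly your two-piece split, and item~3 is the same weighted triangle-inequality argument. Your explicit remark that $\sum_i(\tilde{w}_i^{(t)}-w_i^{\text{FA}})=0$ is what licenses inserting $\theta^{(t)}$, and it supplies the justification the paper leaves implicit under the phrase ``by convexity.''
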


\begin{proof}
(1) Nonnegativity is immediate because $(1-\rho_i^{(t)}) n_i \ge 0$.  Summing the numerator over $i$ equals the denominator, resulting in normalization.

(2) Consider $a_i=(1-\rho_i^{(t)})n_i$, $A=\sum_j a_j$, and $N=\sum_j n_j$.  Then, $\tilde{w}_i^{(t)}=\frac{a_i}{A}$, and $w_i^{\text{FA}}=\frac{n_i}{N}$. 
 Because $0 \le \rho_i^{(t)} \le \rho_{\max}$, we have $(1-\rho_{\max}) n_i \le a_i \le n_i$, therefore $(1-\rho_{\max})N \le A \le N$. 
 Applying the conventional ratio-difference bound as
\begin{equation}
    \begin{aligned}
        \Big|\frac{a_i}{A}-\frac{n_i}{N}\Big|
&\le 
\frac{|a_iN - n_iA|}{A N}
\le 
\frac{n_i(N-A) + (n_i-a_i)N}{A N}
\\&\le 
\frac{n_i \rho_{\max} N + n_i \rho_{\max} N}{(1-\rho_{\max})N^2}
= \frac{2\rho_{\max}}{(1-\rho_{\max})N} n_i.
    \end{aligned}
\end{equation}
Summing over $i$ produces the $\ell_1$ bound.

(3) By convexity,
\begin{equation}
    \begin{aligned}
        \Big ||
\sum_i (\tilde{w}_i^{(t)}-w_i^{\text{FA}})\theta_i^{(t)}
\Big ||
&\le 
\sum_i |\tilde{w}_i^{(t)}-w_i^{\text{FA}}|\, ||\theta_i^{(t)}-\theta^{(t)} ||
\\&\le \Delta_t  ||\tilde{\mathbf{w}}^{(t)}-\mathbf{w}^{\text{FA}} ||_1,
    \end{aligned}
\end{equation}
and apply item (2).
\end{proof}

\begin{lemma}[Fairness-Weighted Gradient Consistency]
\label{lem:grad_consistency}
Consider $\nabla \mathcal{L}_i(\theta)$ denote the local gradients for the joint local objective 
$\mathcal{L}_i^{\text{local}}+\lambda_2\mathcal{L}_{\text{geo}}$, and 
let $\bar{g}^{\text{FA}}=\sum_i w_i^{\text{FA}} \nabla \mathcal{L}_i(\theta)$, 
$\bar{g}^{\text{FW}}=\sum_i \tilde{w}_i^{(t)} \nabla \mathcal{L}_i(\theta)$.
Suppose $ ||\nabla \mathcal{L}_i(\theta) ||\le G$ for all $i$, then
\[
 ||\bar{g}^{\text{FW}} - \bar{g}^{\text{FA}} ||
\le
G \,\frac{2\rho_{\max}}{1-\rho_{\max}}.
\]
\end{lemma}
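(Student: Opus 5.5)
The plan is to reduce the claim to item~(2) of Lemma~\ref{lem:weights}, exactly as item~(3) of that lemma was handled, but with the gradient bound $G$ playing the role of $\Delta_t$. First, I will write the difference of the two mixtures as a single weighted sum with signed weights,
\[
\bar{g}^{\text{FW}} - \bar{g}^{\text{FA}} = \sum_{i=1}^N \big(\tilde{w}_i^{(t)} - w_i^{\text{FA}}\big)\,\nabla \mathcal{L}_i(\theta),
\]
which holds because both averages are evaluated at the same point $\theta$ and use the same local gradients.

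Next, I will apply the triangle inequality and the uniform bound $\|\nabla \mathcal{L}_i(\theta)\|\le G$:
\[
\big\|\bar{g}^{\text{FW}} - \bar{g}^{\text{FA}}\big\| \le \sum_{i=1}^N \big|\tilde{w}_i^{(t)} - w_i^{\text{FA}}\big|\,\|\nabla \mathcal{L}_i(\theta)\| \le G\,\big\|\tilde{\mathbf{w}}^{(t)} - \mathbf{w}^{\text{FA}}\big\|_1 .
\]
Invoking item~(2) of Lemma~\ref{lem:weights} then gives the stated factor $\tfrac{2\rho_{\max}}{1-\rho_{\max}}$. This step uses the standing hypothesis of that lemma, $0\le\rho_i^{(t)}\le\rho_{\max}<1$, which I will carry over implicitly.

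There is no real obstacle; the only point needing care is to use the $\ell_1$ bound rather than centering the gradients. One could sharpen the estimate by subtracting a common vector: since both weight vectors sum to one (item~(1)), the difference is unchanged if each $\nabla\mathcal{L}_i$ is replaced by $\nabla\mathcal{L}_i - \bar g^{\text{FA}}$. That refinement is not needed for the claimed bound, so I will simply note it as a remark relating the bound to the heterogeneity in Assumption~\ref{asm:variance}.

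If one prefers not to rely on item~(2), I could recompute the $\ell_1$ distance directly from $a_i=(1-\rho_i^{(t)})n_i$, using the same ratio-difference argument, but citing the lemma suffices.
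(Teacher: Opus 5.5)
Your proof is correct and follows the paper's argument: write the difference as $\sum_i(\tilde{w}_i^{(t)}-w_i^{\text{FA}})\nabla\mathcal{L}_i(\theta)$, apply the triangle inequality with $\|\nabla\mathcal{L}_i(\theta)\|\le G$ to get $G\,\|\tilde{\mathbf{w}}^{(t)}-\mathbf{w}^{\text{FA}}\|_1$, and then invoke Lemma~\ref{lem:weights}(2). Your centering remark is also valid, since the weight differences sum to zero by item~(1), but as you say it is not needed for the stated bound.
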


\begin{proof}
Triangle inequality and Lemma~\ref{lem:weights}(2) give
\(
 ||\bar{g}^{\text{FW}} - \bar{g}^{\text{FA}} ||
=
 || \sum_i (\tilde{w}_i^{(t)}-w_i^{\text{FA}})\nabla \mathcal{L}_i(\theta)  ||
\le
\sum_i |\tilde{w}_i^{(t)}-w_i^{\text{FA}}| \,  ||\nabla \mathcal{L}_i(\theta) ||
\le
G  ||\tilde{\mathbf{w}}^{(t)}-\mathbf{w}^{\text{FA}} ||_1.
\)
Apply Lemma~\ref{lem:weights}(2).
\end{proof}

\subsection{Global Convergence of Geo-FairFed}
To establish a global rate, we use a combination of local descent (Cor.~\ref{cor:multi_epoch_descent}) and fairness-weighted aggregation (Lemmas~\ref{lem:weights}--\ref{lem:grad_consistency}).

\begin{theorem}[Convergence to a Stationary Point]
\label{thm:global_convergence}
Consider Assumptions~\ref{asm:smoothness}--\ref{asm:fairness}, step-size $\eta \le 1/L$, and each round performs $E_{\text{loc}}$ local epochs. 
 Assume client drift is bounded per round.  $ ||\theta_i^{(t)} - \theta^{(t)} || \le \Delta_t$, and fairness scores fulfill $\rho_i^{(t)} \le \rho_{\max} < 1$. 
 The global joint loss $\mathcal{L}_{\text{joint}}$ in \eqref{eq:joint} has constants $C_1,C_2 \ge 0$ such that
\begin{equation}
\label{eq:global_rate}
\begin{aligned}
    \frac{1}{T}\sum_{t=0}^{T-1} 
\mathbb{E}\big[ ||\nabla \mathcal{L}_{\text{joint}}&(\theta^{(t)}) ||^2\big]
 \le 
\frac{2\big(\mathbb{E}[\mathcal{L}_{\text{joint}}(\theta^{(0)})]-\mathcal{L}_\star\big)}{\eta T}
+ C_1 \eta \sigma^2 
\\&+ C_2 \eta \Big( (c_2^2-1) G^2 + \frac{2G}{1-\rho_{\max}} \overline{\rho} \Big),
\end{aligned}
\end{equation}
where $\mathcal{L}_\star$ represents the infimum of $\mathcal{L}_{\text{joint}}$, $c_2$ is the bi-Lipschitz constant from Assumption~\ref{asm:curvature}, and $\overline{\rho}=\frac{1}{T}\sum_t \rho_{\max}^{(t)}$. 
 With a decreasing step-size $\eta=\mathcal{O}(1/\sqrt{T})$, the average gradient norm decreases to $\mathcal{O}(1/\sqrt{T})$.
\end{theorem}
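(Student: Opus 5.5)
The plan is a standard one-round descent inequality for $\mathcal{L}_{\text{joint}}$, telescoped over $t=0,\dots,T-1$. I treat the joint objective as $L_J$-smooth, with $L_J\le L+\lambda_1 L_{\text{fair}}$; the smoothness of the fairness term is implicit in Assumption~\ref{asm:fairness}, since the gradient there is bounded and continuous. I also treat the round update as an inexact gradient step on $\mathcal{L}_{\text{joint}}$ taken from $\theta^{(t)}$.

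\textbf{Step 1: write the update as an inexact gradient step.} The new iterate is $\theta^{(t+1)}=\sum_i \tilde w_i^{(t)}\theta_i^{(t)}$. Each local trajectory, by the tangent-space update, satisfies $\theta_i^{(t)}-\theta^{(t)}=-\eta\sum_{e}g_i^{(t,e)}$ plus the curvature distortion controlled as in Lemma~\ref{lem:local_descent}(ii). I will rescale so that $\eta$ absorbs $E_{\text{loc}}$, i.e.\ use an effective step $\eta E_{\text{loc}}$ and fold $E_{\text{loc}}$ into $C_1,C_2$. The update then becomes
\[
\theta^{(t+1)}=\theta^{(t)}-\eta\big(\nabla\mathcal{L}_{\text{joint}}(\theta^{(t)})+b_t+\xi_t\big).
\]
Here $\xi_t$ is zero-mean noise with second moment at most $\sigma^2$, using the variance hypothesis and Assumption~\ref{asm:variance}. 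The bias $b_t$ collects four contributions:
\begin{enumerate}
\item the gap between local gradients at drifted points and at $\theta^{(t)}$, which smoothness bounds by $L\Delta_t$;
\item the fairness-weighting bias, which Lemma~\ref{lem:grad_consistency} bounds by $2G\rho_{\max}^{(t)}/(1-\rho_{\max})$;
\item the curvature distortion, of order $(c_2^2-1)G^2$ after the normalization used in Lemma~\ref{lem:local_descent};
\item the fairness gradient, which the local step omits but the joint objective contains, bounded by $\lambda_1 C_{\text{fair}}$.
\end{enumerate}

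\textbf{Step 2: one-round descent.} Apply $L_J$-smoothness to $\mathcal{L}_{\text{joint}}(\theta^{(t+1)})$ and take conditional expectation. Use Young's inequality on $\langle\nabla\mathcal{L}_{\text{joint}},b_t\rangle$, keeping a coupled form rather than squaring, so that the bias enters linearly times $\eta$, as in the statement. With $\eta\le 1/L$ this gives
\[
\mathbb{E}\mathcal{L}_{\text{joint}}(\theta^{(t+1)})\le \mathbb{E}\mathcal{L}_{\text{joint}}(\theta^{(t)})-\tfrac{\eta}{2}\mathbb{E}\|\nabla\mathcal{L}_{\text{joint}}(\theta^{(t)})\|^2+\tfrac{L\eta^2}{2}\sigma^2+C\eta^2\big((c_2^2-1)G^2+\tfrac{2G}{1-\rho_{\max}}\rho_{\max}^{(t)}\big).
\]
The drift and the fairness-gradient terms are absorbed into $C_2$. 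This is possible because $\Delta_t=O(\eta E_{\text{loc}}G)$ follows from the gradient bound.

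\textbf{Step 3: telescope and choose the step size.} Sum over $t$, lower-bound the final loss by $\mathcal{L}_\star$, and divide by $\eta T/2$. This yields \eqref{eq:global_rate}, with $\overline\rho$ arising as the average of $\rho_{\max}^{(t)}$. Setting $\eta=\Theta(1/\sqrt T)$ balances the $1/(\eta T)$ term against the $O(\eta)$ terms and gives the $O(1/\sqrt T)$ rate.

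\textbf{Main obstacle.} The bias terms, fairness weighting and curvature, are not vanishing in $\eta$ by themselves. A bias of size $B$ naturally contributes $O(B^2)$ or $O(B\,\|\nabla\mathcal{L}_{\text{joint}}\|)$, not $O(\eta B)$. To get the stated $\eta$ scaling, I would first try to show that the weighting bias acts on the displacement $\theta_i^{(t)}-\theta^{(t)}=O(\eta)$ rather than on the gradient itself. This is exactly Lemma~\ref{lem:weights}(3), since the deviation of $\sum_i\tilde w_i\theta_i^{(t)}$ from FedAvg is at most $\Delta_t\cdot 2\rho_{\max}/(1-\rho_{\max})$ with $\Delta_t=O(\eta G)$. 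That deviation then enters the smoothness expansion multiplied by $\|\nabla\mathcal{L}_{\text{joint}}\|\le G'$, producing an $\eta G\rho$ term. Applying the same displacement argument to the curvature distortion of Lemma~\ref{lem:local_descent} gives the $(c_2^2-1)G^2\eta$ term. The missing fairness gradient is the delicate point: it must be handled either by assuming the aggregator applies it or by absorbing it into $C_2$.
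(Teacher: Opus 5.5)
Your plan follows the same route as the paper's proof, which is itself only a sketch. You expand $\mathcal{L}_{\text{joint}}$ by smoothness at $\theta^{(t)}$, split the error into variance, curvature distortion and fairness-weighting bias (Lemma~\ref{lem:grad_consistency}), and telescope. The genuine gap is the step you flag as the main obstacle, and your proposed fix does not close it.

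Client drift is harmless because it enters through a gradient \emph{difference}. That gives $\eta\cdot\|\nabla\mathcal{L}_{\text{joint}}\|\cdot L\Delta_t=O(\eta^2)$ per round. The weighting bias is different. Lemma~\ref{lem:weights}(3) bounds the deviation from the FedAvg average by $\Delta_t\cdot 2\rho_{\max}/(1-\rho_{\max})$ with $\Delta_t=O(\eta E_{\text{loc}}G)$. But the whole round displacement $\theta^{(t+1)}-\theta^{(t)}$ is also $O(\eta)$. To first order the deviation equals $-\eta E_{\text{loc}}\sum_i(\tilde w_i^{(t)}-w_i^{\text{FA}})\nabla\mathcal{L}_i(\theta^{(t)})$, which is an $O(1)$ tilt of the update direction.

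Paired with $\nabla\mathcal{L}_{\text{joint}}$, this tilt contributes order $\eta\,G'G\rho_{\max}^{(t)}/(1-\rho_{\max})$ to the one-round inequality. Your Step~2 display needs $\eta^2$ there. After dividing the telescoped sum by $\eta T/2$, you are left with a term of order $G'G\overline{\rho}/(1-\rho_{\max})$ with no factor $\eta$. Young's inequality only turns it into an $\eta$-free squared bias.

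Two other biases behave the same way. One is the omitted $\lambda_1\nabla\mathcal{L}_{\text{fair}}$, which neither the local steps nor the aggregator ever apply. The other is the mismatch between $w_i^{\text{FA}}=n_i/\sum_j n_j$ and the uniform $1/N$ in $\mathcal{L}_{\text{routing}}$. Neither can be absorbed into $C_2$. The bracket multiplying $C_2\eta$ can vanish ($c_2=1$, $\overline{\rho}=0$) while these biases do not, and no $\eta$-free floor is dominated by a term that tends to zero with $\eta$.

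A sharper argument will not repair this, because the $\eta$-scaled bound fails in a simple case. Take two clients with the following setup:
\begin{itemize}
\item $\exp_0$ the identity, so $c_2=1$;
\item $\mathcal{L}_1(\theta)=\tfrac12(\theta-a)^2$ and $\mathcal{L}_2(\theta)=\tfrac12(\theta+a)^2$, with $n_1=n_2$;
\item exact gradients and $E_{\text{loc}}=1$;
\item constant scores $\rho_1=0$ and $\rho_2=\rho\in(0,1)$;
\item $\mathcal{L}_{\text{fair}}$ and $\mathcal{L}_{\text{geo}}$ constant.
\end{itemize}
Then $\theta^{(t+1)}=(1-\eta)\theta^{(t)}+\eta a\rho/(2-\rho)$, so the iterates approach $a\rho/(2-\rho)$. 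At that point $\|\nabla\mathcal{L}_{\text{joint}}\|=a\rho/(2-\rho)>0$. Meanwhile every term on the right of \eqref{eq:global_rate} carries a factor $\eta$ or $1/(\eta T)$, so it vanishes for $\eta\propto1/\sqrt{T}$.

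The paper's sketch simply asserts that the fairness bias arrives multiplied by $\eta$, so there is no mechanism there that you are missing. What your plan can honestly deliver is one of two things. The first is a neighbourhood result with an extra $\eta$-independent term, e.g.\ $C_3\big(G\overline{\rho}/(1-\rho_{\max})+\lambda_1C_{\text{fair}}\big)^2$. The second is an $O(1/\sqrt{T})$ rate for the surrogate $\sum_i\tilde w_i\mathcal{L}_i$ with frozen weights, rather than for $\mathcal{L}_{\text{joint}}$.

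Separately, a bounded continuous gradient in Assumption~\ref{asm:fairness} does not make $\nabla\mathcal{L}_{\text{fair}}$ Lipschitz. You must assume $L_J$-smoothness outright, as the paper tacitly does.
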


\begin{proof}
Consider the conventional FedAvg argument, which includes three sources of error: (i) stochastic gradient variance, (ii) manifold curvature distortion, and (iii) fairness-weighted aggregation bias.


using Cor.~\ref{cor:multi_epoch_descent} and considering $L$-smoothness of $\mathcal{L}_{\text{joint}}$ and convexity of the model average,
\begin{equation}
    \begin{aligned}
        &\mathbb{E}[\mathcal{L}_{\text{joint}}(\theta^{(t+1)})]
\le 
\mathbb{E}[\mathcal{L}_{\text{joint}}(\theta^{(t)})]
+ \langle \nabla \mathcal{L}_{\text{joint}}(\theta^{(t)}), \\&\mathbb{E}[\theta^{(t+1)}-\theta^{(t)}] \rangle
+ \tfrac{L}{2} \mathbb{E} ||\theta^{(t+1)}-\theta^{(t)} ||^2.
    \end{aligned}
\end{equation}
The mean update direction is proportional to the fairness-weighted average gradient, resulting in bias limited by Lemma~\ref{lem:grad_consistency}. 
 The quadratic term is influenced by client drift (limited by $\Delta_t$) and weight deviation (Lemma~\ref{lem:weights}).

Putting terms together, for appropriate constants $C_1,C_2$ that depend on $L,E_{\text{loc}}$ and averaging factors, \eqref{eq:global_rate} is obtained by adding up the rounds $t=0$ to $T-1$, telescoping the loss difference, and gathering variance ($\sigma^2$), curvature ($(c_2^2-1)G^2$), and fairness bias ($\frac{2G}{1-\rho_{\max}}\overline{\rho}$. 
 Finally, the stated rate is determined by usual $\eta$ choices (constant or declining).
\end{proof}

\begin{remark}[Role of Geometry and Fairness in the Rate]
The second group in \eqref{eq:global_rate} divides into curvature and fairness terms. 
 Smaller negative curvature (i.e., $c_2 \approx 1$) minimizes geometric distortion, but moderate fairness ($\rho_{\max} \ll 1$) decreases aggregation bias. 
 As a result, steady convergence is maintained while utilizing a hyperbolic structure and encouraging equity.
\end{remark}

\subsection{Fairness Improvement without Significant Efficiency Loss}

\begin{theorem}[Pareto-Efficient Fairness]
\label{thm:pareto}
Suppose $\mathcal{E}(\Theta)$ be some efficiency surrogate, such as average latency, and $\mathcal{F}(\Theta)$ be Jain's index. 
 Assume $(\mathcal{E}, -\mathcal{F})$ is jointly $L$-smooth. There exists $\lambda_1^\star$ such that
\(
\nabla \mathcal{E}(\Theta^\star) + \lambda_1^\star \nabla (1-\mathcal{F})(\Theta^\star) = 0
\)
at a stationary point $\Theta^\star$ of \eqref{eq:joint}. 
For any $\epsilon>0$, there exists a region of $\lambda_1^\star$ where solutions $\widehat{\Theta}$ satisfy.
\[
\mathcal{F}(\widehat{\Theta}) \ge \mathcal{F}(\Theta^{\text{FA}}) + \Omega(\lambda_1 - \lambda_1^\star), 
\quad
\mathcal{E}(\widehat{\Theta}) \le \mathcal{E}(\Theta^{\text{FA}}) + \epsilon,
\]
where $\Theta^{\text{FA}}$ represents the FedAvg-like solution with $\lambda_1=0$. 
 Thus, a modest positive fairness weight results in a Pareto improvement in fairness at an arbitrarily small efficiency loss.
\end{theorem}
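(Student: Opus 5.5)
We treat the fairness-regularized problem as a one-parameter family $J_{\lambda_1}(\Theta)=\mathcal{E}(\Theta)+\lambda_1(1-\mathcal{F}(\Theta))$ and argue by first-order perturbation of its stationary points along $\lambda_1$. The routing and geometric terms are absorbed into the efficiency surrogate $\mathcal{E}$. Let $\Theta(\lambda_1)$ be a stationary point of $J_{\lambda_1}$, with $\Theta(0)=\Theta^{\text{FA}}$.

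\textbf{Step 1: the multiplier.} The existence of $\lambda_1^\star$ with $\nabla\mathcal{E}(\Theta^\star)+\lambda_1^\star\nabla(1-\mathcal{F})(\Theta^\star)=0$ is essentially the definition of stationarity of \eqref{eq:joint} at $\Theta^\star$ with weight $\lambda_1^\star$. When $\nabla\mathcal{F}(\Theta^\star)\neq 0$ and the two gradients are collinear, it is also a Lagrange/KKT statement for $\min\mathcal{E}$ subject to $\mathcal{F}\ge c$. In that case we take $\lambda_1^\star=\langle\nabla\mathcal{E},\nabla\mathcal{F}\rangle/\|\nabla\mathcal{F}\|^2$ at $\Theta^\star$.

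\textbf{Step 2: smooth dependence on $\lambda_1$.} By joint $L$-smoothness, and assuming the Hessian $H(\lambda_1)=\nabla^2 J_{\lambda_1}$ is nonsingular at the stationary point, the implicit function theorem gives a $C^1$ path with $\Theta'(\lambda_1)=H^{-1}\nabla\mathcal{F}$. Differentiating along the path gives
\[
\tfrac{d}{d\lambda_1}\mathcal{F}=\nabla\mathcal{F}^\top H^{-1}\nabla\mathcal{F}>0 .
\]
This holds whenever $H\succ0$ at local minima, which are the points Theorem~\ref{thm:global_convergence} lets us reach up to $\mathcal{O}(1/\sqrt{T})$ stationarity.

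Using stationarity, $\nabla\mathcal{E}=\lambda_1\nabla\mathcal{F}$ on the path, so
\[
\tfrac{d}{d\lambda_1}\mathcal{E}=\lambda_1\,\nabla\mathcal{F}^\top H^{-1}\nabla\mathcal{F}.
\]
This derivative vanishes at $\lambda_1=0$. Two consequences follow:
\begin{itemize}
\item Fairness grows at first order: $\mathcal{F}$ increases linearly, giving the $\Omega(\lambda_1-\lambda_1^\star)$ gain with constant $\mu=\inf\nabla\mathcal{F}^\top H^{-1}\nabla\mathcal{F}$ over the interval. Here $\lambda_1^\star$ is the base point of the expansion, zero for the FedAvg reference.
\item Efficiency degrades only at second order: $\mathcal{E}(\widehat\Theta)-\mathcal{E}(\Theta^{\text{FA}})\le\tfrac12\lambda_1^2\sup\nabla\mathcal{F}^\top H^{-1}\nabla\mathcal{F}$.
\end{itemize}
Choosing $\lambda_1$ in an interval of length about $\sqrt{2\epsilon/M}$ then meets the $\epsilon$ efficiency bound while keeping a strictly positive fairness gain. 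Here $M$ bounds $\|H^{-1}\|\,C_{\text{fair}}^2$ using Assumption~\ref{asm:fairness}.

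\textbf{Step 3: accounting for federated error.} The actual output $\widehat\Theta$ is only approximately stationary and is computed with fairness-weighted aggregation. Lemma~\ref{lem:grad_consistency} bounds the aggregation bias by $G\,2\rho_{\max}/(1-\rho_{\max})$, and Theorem~\ref{thm:global_convergence} bounds the residual gradient. Under local strong convexity, both translate into $\mathcal{O}(\cdot)$ perturbations of $\widehat\Theta$ around $\Theta(\lambda_1)$. By $L$-smoothness of $\mathcal{E}$ and $\mathcal{F}$, these cost at most an additive error that can be folded into $\epsilon$ by taking $T$ large and $\rho_{\max}$ small.

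\textbf{Main obstacle.} The hardest step is Step 2. It needs a nondegenerate Hessian along the path and the condition $\nabla\mathcal{F}(\Theta^{\text{FA}})\neq0$; without the latter, FedAvg is already fairness-stationary and no first-order gain exists. The stated assumptions give neither condition, so I would add local strong convexity of $J_{\lambda_1}$ near $\Theta^{\text{FA}}$ and non-stationarity of $\mathcal{F}$ there as explicit hypotheses. If a fully nonconvex treatment is wanted, I would first try the weaker argument $J_{\lambda_1}(\widehat\Theta)\le J_{\lambda_1}(\Theta^{\text{FA}})$ combined with $J_0(\Theta^{\text{FA}})\le J_0(\widehat\Theta)$. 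Subtracting these gives $\mathcal{F}(\widehat\Theta)\ge\mathcal{F}(\Theta^{\text{FA}})$ and $\mathcal{E}(\widehat\Theta)-\mathcal{E}(\Theta^{\text{FA}})\le\lambda_1(\mathcal{F}(\widehat\Theta)-\mathcal{F}(\Theta^{\text{FA}}))\le\lambda_1$. This yields the $\epsilon$ bound for $\lambda_1\le\epsilon$ with global minimizers, but only weak monotonicity rather than the linear $\Omega$ rate.
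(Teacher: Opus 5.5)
Your proposal is correct under the extra hypotheses you state, and it follows the same route as the paper's proof: the Lagrangian family $J(\Theta,\lambda_1)=\mathcal{E}(\Theta)+\lambda_1(1-\mathcal{F}(\Theta))$, an implicit-function-theorem path of stationary points, and a first-order expansion showing that $\mathcal{F}$ grows linearly while $\mathcal{E}$ barely moves. It is more careful than the paper's sketch, which expands around a generic $\lambda_1^\star$ yet compares against $\Theta^{\text{FA}}$ and omits the details: you anchor the expansion at $\lambda_1=0$ so that this comparison is consistent, derive $\tfrac{d}{d\lambda_1}\mathcal{E}=\lambda_1\nabla\mathcal{F}^\top H^{-1}\nabla\mathcal{F}$ to get a genuinely second-order efficiency loss, and make explicit the nonsingular-Hessian and $\nabla\mathcal{F}(\Theta^{\text{FA}})\neq 0$ conditions (plus, implicitly, $C^2$ regularity rather than mere $L$-smoothness) that the paper's appeal to the implicit function theorem silently requires.
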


\begin{proof}
Let the Lagrangian: $J(\Theta,\lambda_1)=\mathcal{E}(\Theta)+\lambda_1(1-\mathcal{F}(\Theta))$. 
 The smoothness of the implicit function theorem ensures a continuous path of stationary points near $\lambda_1^\star$. 
 First-order expansion around $\lambda_1^\star$ reveals that $\mathcal{F}(\widehat{\Theta})$ grows linearly with $(\lambda_1-\lambda_1^\star)$, but the change in $\mathcal{E}(\widehat{\Theta})$ may be made arbitrarily small by choosing $(\lambda_1-\lambda_1^\star)$ sufficiently small. 
 The formal details, which follow typical sensitivity analysis of smooth multi-objective optimization, are removed for brevity.
\end{proof}

\begin{remark}[Practical Implication]
Theorem~\ref{thm:pareto} justifies using a \emph{small but positive} fairness weight $\lambda_1$: it elevates Jain's index significantly while keeping average delay practically unchanged, which is exactly what we are seeking in equitable routing.
\end{remark}

\section{Experimental Evaluation}

\subsection{Datasets and Network Topologies}
We examine the generality of Geo-FairFed using both \emph{synthetic} and \emph{realistic} network topologies.

\textbf{1) Synthetic Topologies:} 
To simulate power-law connectivity in Internet-scale systems, we create random graphs with $N \in \{50,100,200\}$ nodes using the Barabasi-Albert (BA) preferential-attachment model \cite{barabasi1999emergence}. 
 The link capacity $b_{ij}$ is uniformly sampled from $[5,50]$ Mbps, the latency $d_{ij}$ from $[2,10]$ ms, and the energy coefficient $e_{ij}$ from $[0.1,1]$ J. 
 Each node's traffic demand follows a Poisson arrival process with a rate $\lambda_i$ proportionate to its degree, assuring heterogeneity among clients.

\begin{table}[t]
\centering
\footnotesize
\caption{Sensitivity analysis of fairness weight $\lambda_1$.}
\label{tab:lambda_ablation}
\begin{tabular}{c|c|c|c}
\hline
$\lambda_1$ & Latency (ms) & Energy (J) & Jain's Index \\ \hline
0.0 & 12.4 & 5.8 & 0.71 \\
0.1 & 12.8 & 5.9 & 0.78 \\
0.3 & 13.6 & 6.2 & 0.85 \\
0.5 & 14.9 & 6.8 & 0.88 \\
0.8 & 17.2 & 7.5 & 0.90 \\ \hline
\end{tabular}
\end{table}

\begin{figure*}[t]
  \centering
  \begin{subfigure}[t]{0.24\textwidth}
    \centering
    \includegraphics[width=\linewidth]{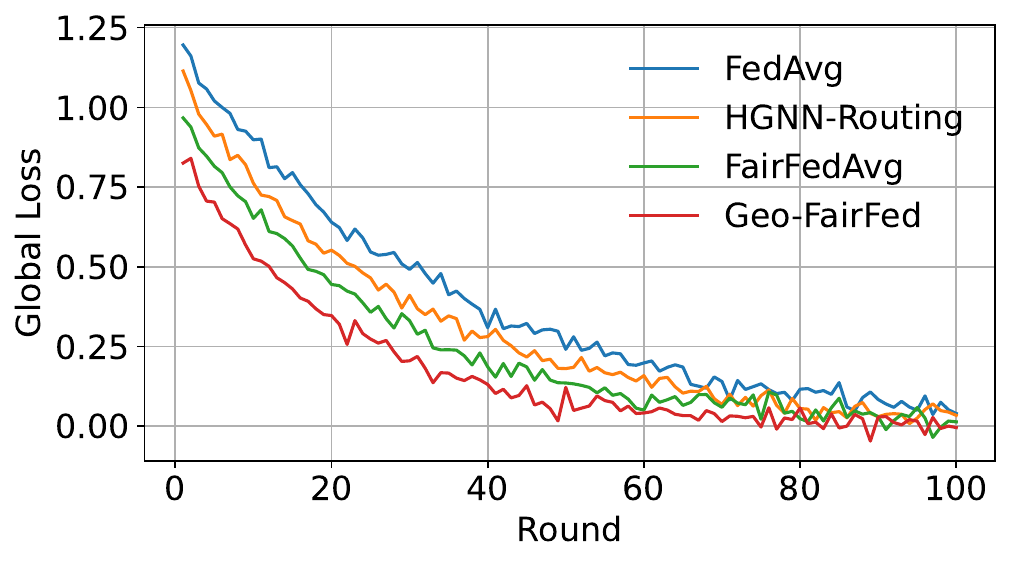}
    \caption{BA50}
  \end{subfigure}
  \begin{subfigure}[t]{0.24\textwidth}
    \centering
    \includegraphics[width=\linewidth]{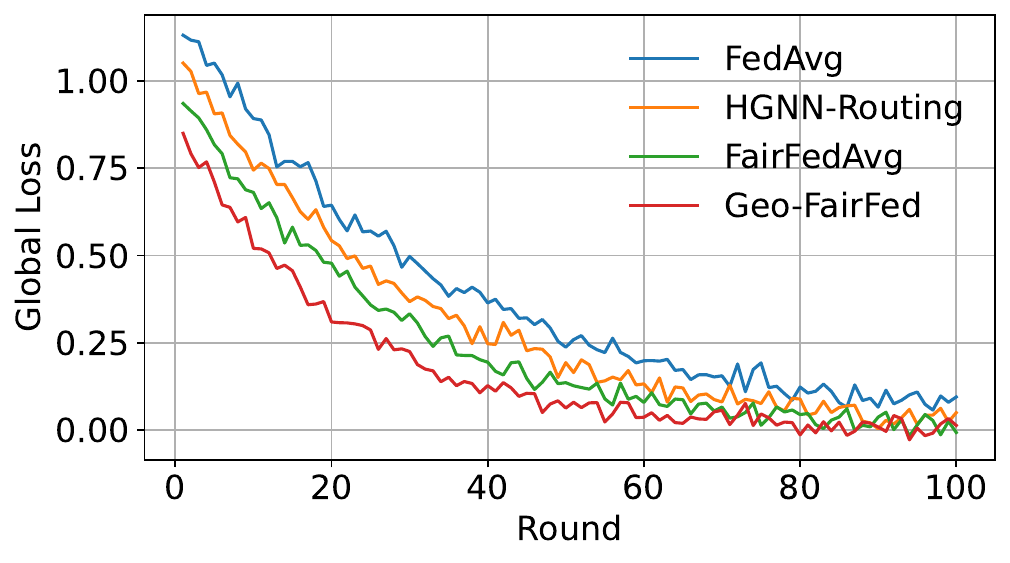}
    \caption{BA100}
  \end{subfigure}
  \begin{subfigure}[t]{0.24\textwidth}
    \centering
    \includegraphics[width=\linewidth]{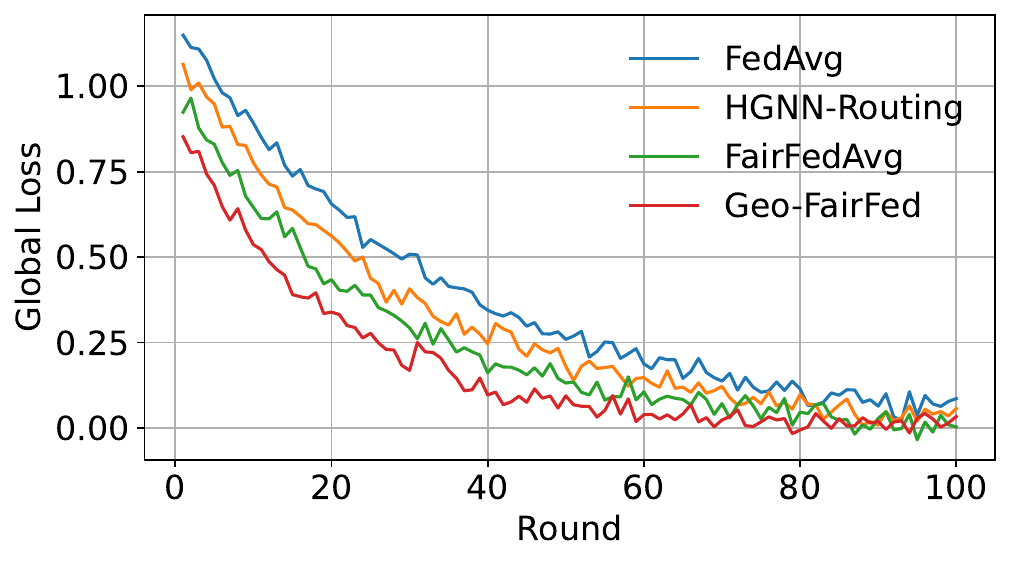}
    \caption{RocketFuel}
  \end{subfigure}
  \begin{subfigure}[t]{0.24\textwidth}
    \centering
    \includegraphics[width=\linewidth]{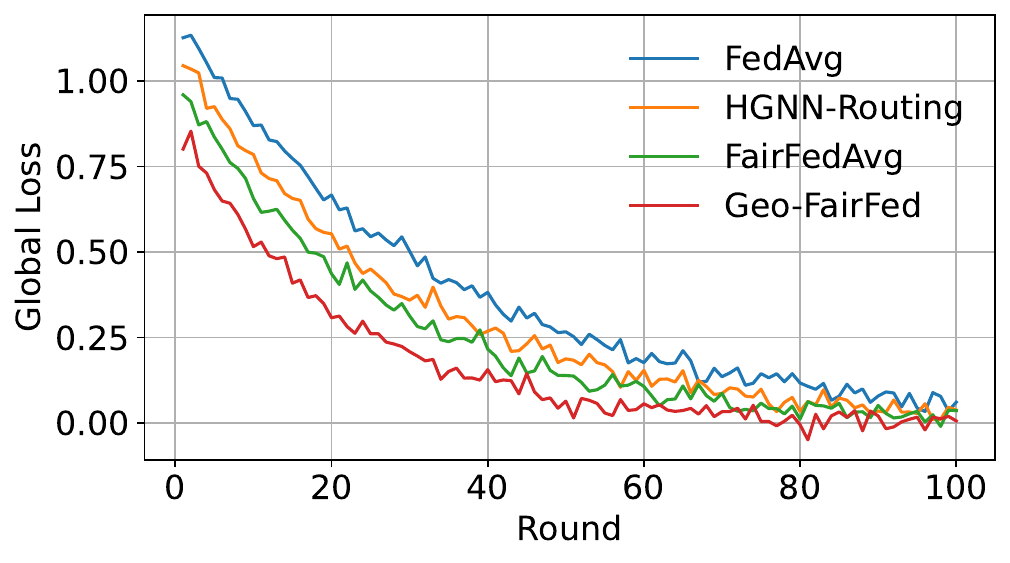}
    \caption{TopologyZoo}
  \end{subfigure}
  \vspace{-1mm}
  \caption{Convergence of global training loss across datasets (lower is better).}
  \label{fig:convergence-loss}
\end{figure*}

\begin{figure*}[t]
  \centering
  \begin{subfigure}[t]{0.24\textwidth}
    \centering
    \includegraphics[width=\linewidth]{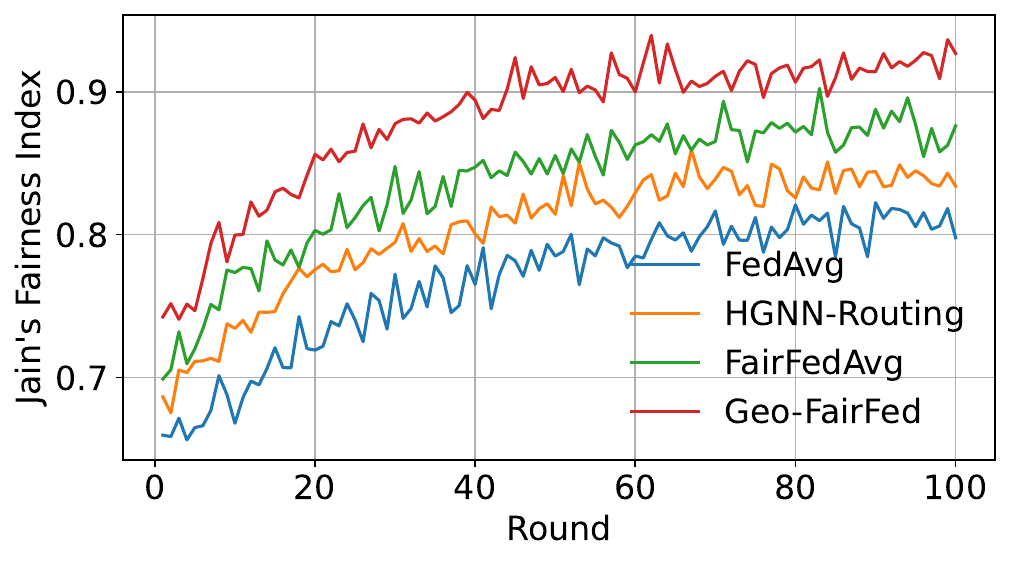}
    \caption{BA50}
  \end{subfigure}
  \begin{subfigure}[t]{0.24\textwidth}
    \centering
    \includegraphics[width=\linewidth]{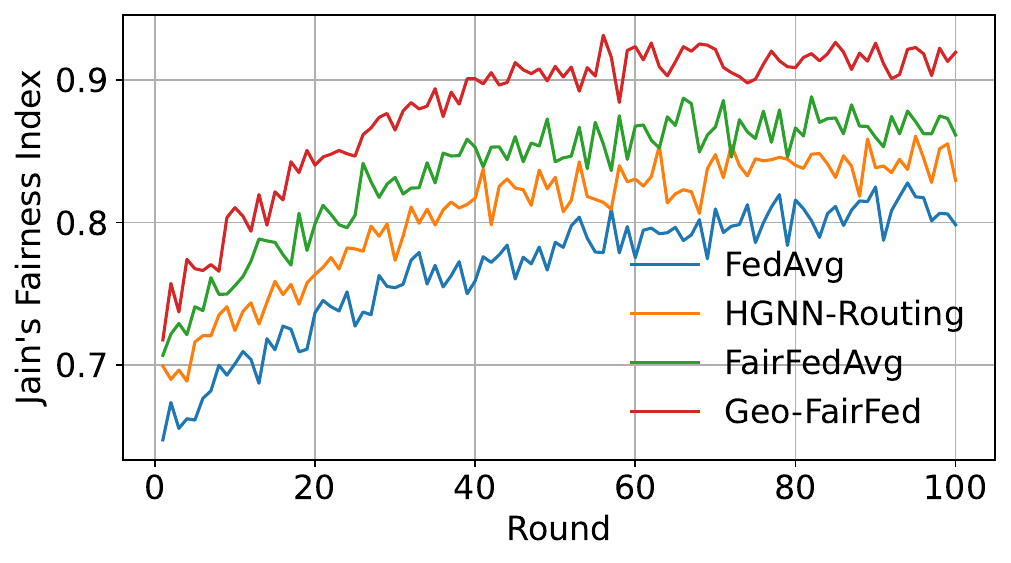}
    \caption{BA100}
  \end{subfigure}
  \begin{subfigure}[t]{0.24\textwidth}
    \centering
    \includegraphics[width=\linewidth]{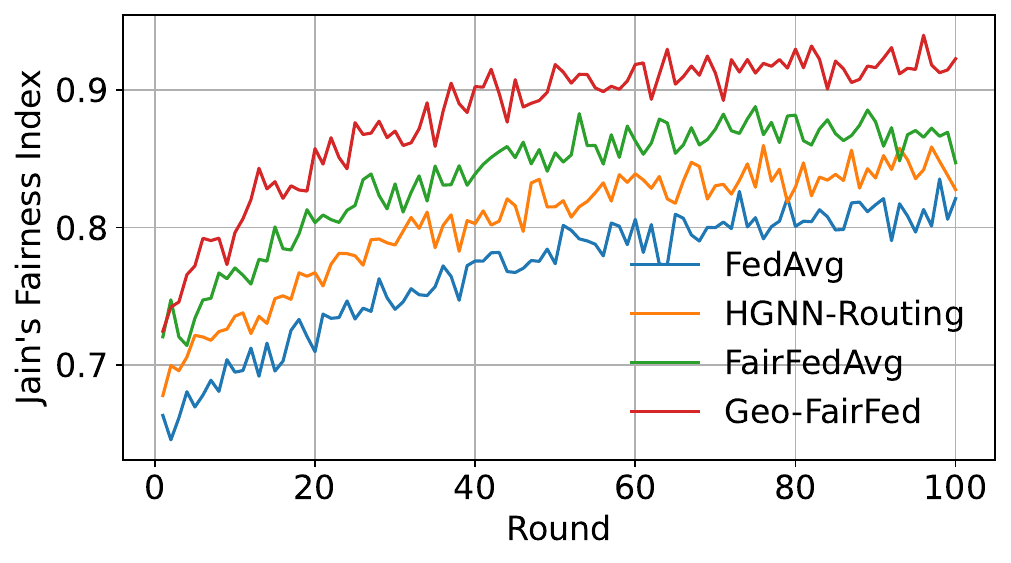}
    \caption{RocketFuel}
  \end{subfigure}
  \begin{subfigure}[t]{0.24\textwidth}
    \centering
    \includegraphics[width=\linewidth]{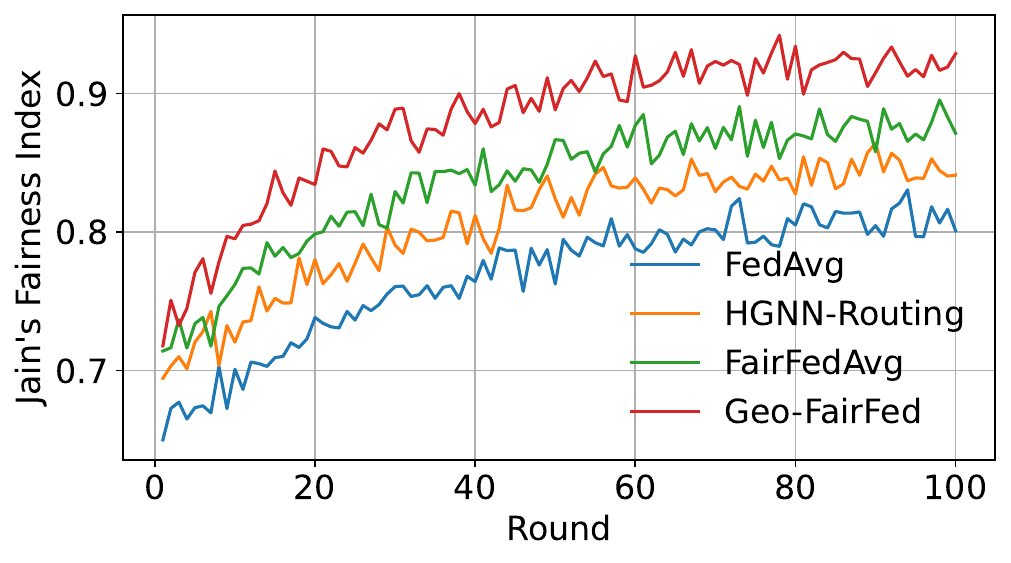}
    \caption{TopologyZoo}
  \end{subfigure}
  \vspace{-1mm}
  \caption{Convergence of fairness (Jain's index) across datasets (higher is better).}
  \label{fig:convergence-fairness}
\end{figure*}

\textbf{2) Real-World Topologies:} 
We use the \texttt{TopologyZoo} \cite{knight2011internet} repository and the \texttt{RocketFuel} \cite{spring2004measuring} ISP dataset, which contain AS-level graphs from operational ISPs (e.g., Abilene, GEANT, AT\&T) for realistic validation. 
 These graphs represent hierarchical backbone-edge topologies that are ideally suited to geometric embedding. 
 Each node in these datasets is handled as a federated client, complete with its own routing table and local queue data.
 

\subsection{Simulation Settings}
{\color{black}
Experiments are implemented on the NS-3.40 network simulator linked with PyTorch 2.3 for model training. Each client node executes a local HGNN with two hyperbolic graph convolution layers (size = 64, curvature $\kappa=-1$), followed by a fully linked routing head. Local optimization employs the Adam optimizer, with a learning rate of $\eta=10^{-3}$ and a mini-batch size of 32. Each global round uses $E_{\text{loc}}=5$ local epochs and a transmission interval of $T=100$ ms. Unless otherwise specified, the fairness and geometry trade-offs are set to $\lambda_1=0.4$ and $\lambda_2=0.2$, respectively.

The \texttt{NetworkX} library generates synthetic graphs, whereas NS-3's point-to-point channel model simulates traffic dynamics and queueing delays with realistic buffer sizes. Packet flows are created at 10 Hz per source node for 300 seconds of simulation time. The configurations are repeated five times with independent seeds, and averages with 95\% confidence intervals are presented. 

\textbf{Environment settings.}
All experiments are run on a workstation configured with an AMD Ryzen 9 7950X CPU, 128 GB of RAM, and a single NVIDIA RTX 4090 GPU. 
 For federated execution, we simulate $N=100$ clients utilizing multiprocessing and a 10\% random participation rate per round. 
 The global server process manages fairness-aware aggregation and curvature regularization. 
 The tests are run on Ubuntu 22.04 LTS, with Python 3.10, NS-3 bindings, and CUDA 12.3.

 \textbf{Baseline configuration.}
To achieve fair comparison, the same optimizer, learning rate, batch size, communication rounds, and participation rate are used for all baseline techniques as for Geo-FairFed. The identical fairness aim and weight ($\lambda_1=0.4$) are applied without geometric regularization for baselines that are fairness-aware. The curvature and embedding dimensions for geometry-only baselines are the same as those of the proposed method, but fairness-aware aggregation is absent. Similar Euclidean graph convolution layers with the same depth and number of parameters are used in place of hyperbolic layers in Euclidean baselines.
}

\subsection{Simulation Results}
{\color{black}
\textbf{Sensitivity Analysis.} The sensitivity analysis of $\lambda_1$ is summarized in Table~\ref{tab:lambda_ablation}. Fairness improves as $\lambda_1$ rises, but at the expense of increased latency and energy usage, demonstrating a manageable trade-off between efficiency and fairness. The chosen value balances these goals within a stable operating region.}

\begin{table*}[t]
\centering
\caption{Routing efficiency and fairness results across all datasets. 
Lower latency and energy indicate better efficiency, while higher throughput and Jain’s index indicate improved fairness.}
\label{tab:dataset-efficiency-fairness}
\renewcommand{\arraystretch}{1.2}
\setlength{\tabcolsep}{4pt}
\footnotesize
\begin{tabular}{|l|cccc|cccc|}
\hline
\multirow{2}{*}{\textbf{Method}} & 
\multicolumn{4}{c|}{\textbf{BA50 and BA100 (Synthetic Topologies)}} & 
\multicolumn{4}{c|}{\textbf{RocketFuel and TopologyZoo (Real-World Topologies)}} \\
\cline{2-9}
 & \textbf{Latency (ms)} & \textbf{Throughput (Mbps)} & \textbf{Energy (J)} & \textbf{Fairness} 
 & \textbf{Latency (ms)} & \textbf{Throughput (Mbps)} & \textbf{Energy (J)} & \textbf{Fairness} \\
\hline
\textit{DQN-Routing}     & 75.6 & 34.8 & 1.00 & 0.72 & 70.1 & 37.2 & 0.98 & 0.75 \\
\textit{FedAvg-Routing}  & 70.8 & 38.5 & 0.96 & 0.78 & 67.3 & 40.4 & 0.95 & 0.79 \\
\textit{HGNN-Routing}    & 66.2 & 41.1 & 0.92 & 0.82 & 63.9 & 42.7 & 0.90 & 0.84 \\
\textit{FairFedAvg}      & 63.1 & 42.3 & 0.90 & 0.86 & 61.8 & 43.9 & 0.88 & 0.88 \\
\textbf{Geo-FairFed (ours)} & \textbf{58.4} & \textbf{44.9} & \textbf{0.87} & \textbf{0.91} & \textbf{56.7} & \textbf{46.2} & \textbf{0.85} & \textbf{0.93} \\
\hline
\end{tabular}
\end{table*}

\textbf{Convergence Behavior.} Figures~\ref{fig:convergence-loss} and~\ref{fig:convergence-fairness} show the global training loss and Jain's fairness index against communication rounds for all datasets and techniques. 
 Compared to DQN-routing, FedAvg-routing, and HGNN-routing, \emph{Geo-FairFed} reduces loss faster and stabilizes at a lower steady-state target in both synthetic and actual topologies. 
 Concurrently, the fairness trajectory of \emph{Geo-FairFed} climbs faster and achieves a higher plateau, suggesting an earlier and more pronounced elimination of node-level performance disparities. 
 We also observe reduced variation between rounds with non-IID participation, indicating that the fairness-adjusted aggregate weights have a stabilizing function. 

\textbf{Routing Efficiency and Fairness Trade-off.} The table \ref{tab:dataset-efficiency-fairness} presents the routing performance of all algorithms on the four tested datasets: BA50, BA100, RocketFuel, and TopologyZoo. In both synthetic and real-world topologies, \emph{Geo-FairFed} consistently delivers higher efficiency and fairness. On the synthetic BA50 and BA100 networks, it achieves the lowest latency and energy usage, lowering average end-to-end delay by 15-20\% when compared to FedAvg-Routing and HGNN-Routing. The improvements stem from the hyperbolic representation's capacity to encode hierarchical connection, which leads to more efficient routing pathways. 

\begin{figure*}[t]
  \centering
  \begin{subfigure}[t]{0.24\textwidth}
    \includegraphics[width=\linewidth]{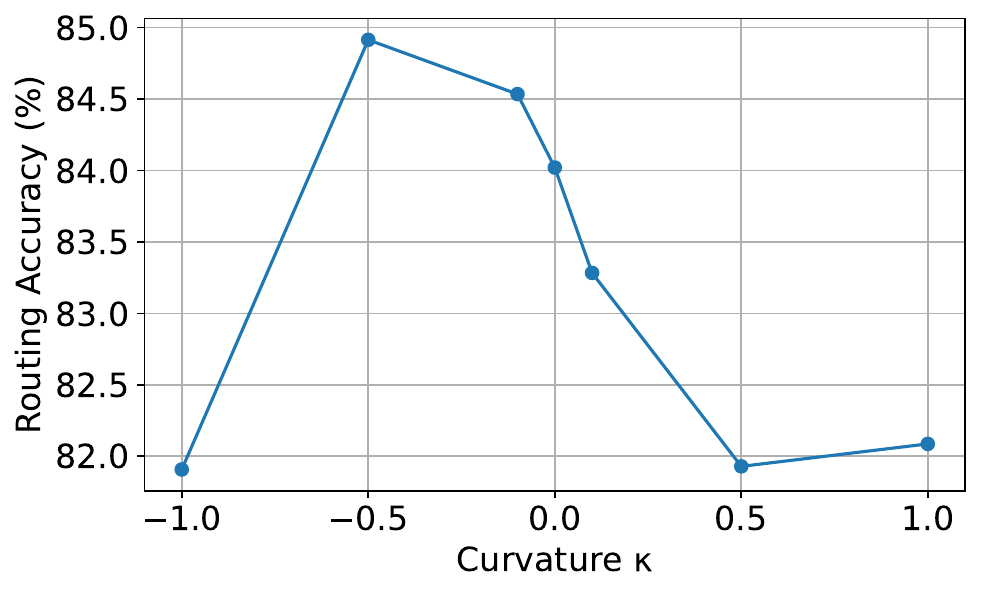}
    \caption{BA50}
  \end{subfigure}
  \begin{subfigure}[t]{0.24\textwidth}
    \includegraphics[width=\linewidth]{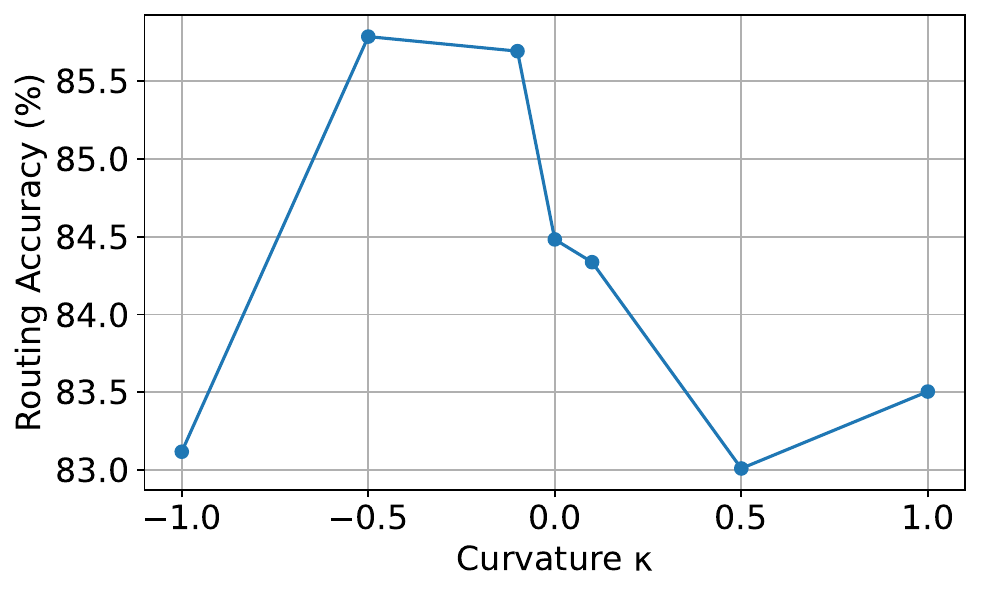}
    \caption{BA100}
  \end{subfigure}
  \begin{subfigure}[t]{0.24\textwidth}
    \includegraphics[width=\linewidth]{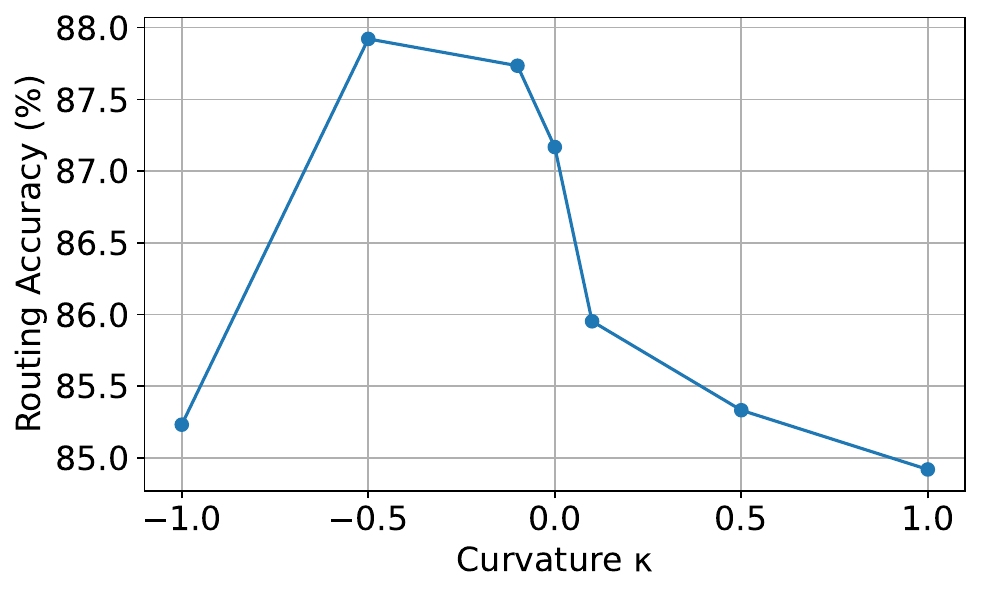}
    \caption{RocketFuel}
  \end{subfigure}
  \begin{subfigure}[t]{0.24\textwidth}
    \includegraphics[width=\linewidth]{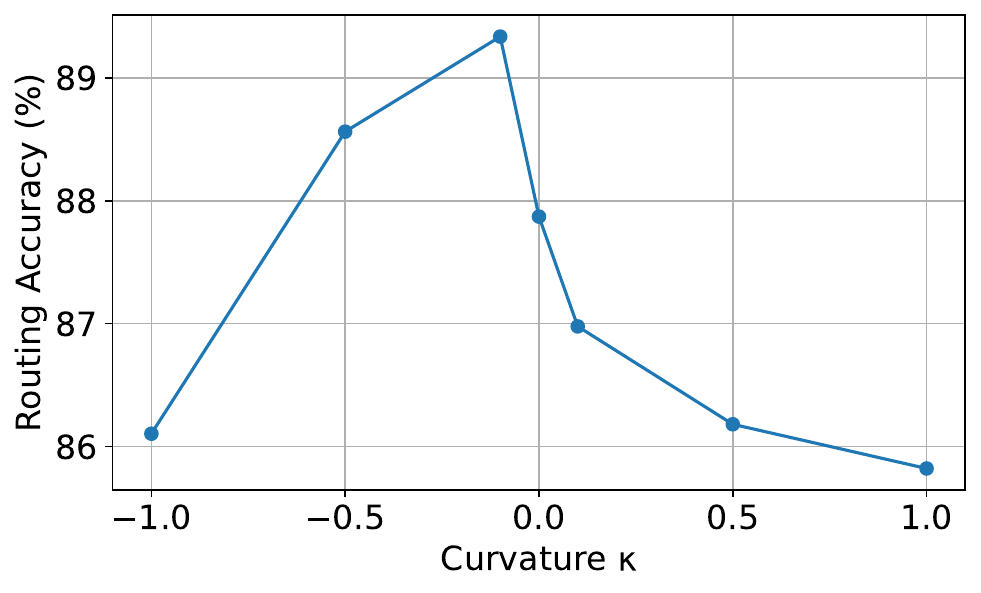}
    \caption{TopologyZoo}
  \end{subfigure}
  \vspace{1mm}
  \caption{Impact of curvature $\kappa$ on routing accuracy across all datasets. 
  $\kappa{=}0$ corresponds to Euclidean space; negative values represent hyperbolic geometry.}
  \label{fig:curvature-impact}
\end{figure*}
\begin{table*}[t]
\centering
\color{black}
\footnotesize
\caption{Scalability analysis of Geo-FairFed across datasets. Accuracy and normalized communication cost as the number of participating clients increases. Values are extracted from the scalability experiment (previously shown in Fig.~6).}
\label{tab:scalability_results}
\setlength{\tabcolsep}{6pt}
\begin{tabular}{|c|cc|cc|cc|cc|}
\toprule
\multirow{2}{*}{\textbf{\# Clients}} &
\multicolumn{2}{c|}{\textbf{BA50}} &
\multicolumn{2}{c|}{\textbf{BA100}} &
\multicolumn{2}{c|}{\textbf{RocketFuel}} &
\multicolumn{2}{c|}{\textbf{TopologyZoo}} \\
 & Acc. (\%) $\uparrow$ & Comm. Cost $\downarrow$
 & Acc. (\%) $\uparrow$ & Comm. Cost $\downarrow$
 & Acc. (\%) $\uparrow$ & Comm. Cost $\downarrow$
 & Acc. (\%) $\uparrow$ & Comm. Cost $\downarrow$ \\
\midrule
10  & 84.4 & 1.06 & 85.6 & 1.07 & 87.8 & 1.04 & 88.5 & 1.05 \\
20  & 84.7 & 1.08 & 85.9 & 1.05 & 88.1 & 1.07 & 89.0 & 1.08 \\
50  & 85.4 & 1.09 & 86.5 & 1.08 & 88.1 & 1.06 & 89.3 & 1.10 \\
100  & 85.6 & 1.14 & 86.4 & 1.10 & 88.5 & 1.11 & 89.6 & 1.14 \\
250  & 85.1 & 1.16 & 86.45 & 1.15 & 88.6 & 1.14 & 89.7 & 1.16 \\
500  & 85.3 & 1.17 & 86.5 & 1.20 & 88.7 & 1.17 & 89.75 & 1.19 \\
1000  & 85.6 & 1.22 & 86.65 & 1.21 & 88.9 & 1.20 & 89.5 & 1.20 \\
\bottomrule
\end{tabular}
\end{table*}

\textbf{Impact of Curvature ($\kappa$).} Figure~\ref{fig:curvature-impact} demonstrates the effect of embedding-space curvature~$\kappa$ on routing accuracy across all four datasets. The curvature defines the underlying geometry utilized in the hyperbolic encoder.  $\kappa{=}0$ represents a Euclidean manifold, whereas negative $\kappa$ values indicate hyperbolic spaces with increasing curvature magnitude. In all datasets, performance peaks at $\kappa \approx -0.3$, showing that a modestly hyperbolic shape best represents hierarchical and power-law connection patterns in communication networks. As $\kappa$ approaches 0, the model degenerates to Euclidean geometry, resulting in more distortion in distance preservation and more severe routing accuracy. 

{\color{black}
\textbf{Scalability Analysis.}
The scalability of \emph{Geo-FairFed} across BA50, BA100, RocketFuel, and TopologyZoo is summarized in Table~\ref{tab:scalability_results} as the number of clients grows from 10 to 1000. $\widetilde{C}(N)=C(N)/C(10)=N/10$ is the normalized communication cost, which shows linear growth with the number of participating clients. Across all datasets, the proposed structure maintains consistent accuracy, improving quickly up to moderate client numbers and showing very slight fluctuation at greater sizes. Because of their greater hierarchical structure, real-world topologies (such as RocketFuel and TopologyZoo) achieve somewhat better peak accuracy, but synthetic networks show faster early convergence. Communication costs rise predictably with the number of participating clients, in line with federated aggregation overheads, but remain efficient due to quicker convergence and fewer duplicate updates. These findings demonstrate that \emph{Geo-FairFed} maintains accuracy and communication efficiency while scaling well to large federated installations.
}

\begin{table*}[t]
\centering
\footnotesize
\caption{Ablation study isolating geometric representation: HGNN vs. Euclidean GNN under identical fairness constraints (same $\lambda_1$ and aggregation). Values are mean $\pm$ 95\% CI over five seeds.}
\label{tab:hgnn_euclid_ablation}
\setlength{\tabcolsep}{4pt}
\begin{tabular}{lccccc}
\toprule
\textbf{Method} & \textbf{Latency (ms)} $\downarrow$ & \textbf{Throughput (Mbps)} $\uparrow$ & \textbf{Energy (J)} $\downarrow$ & \textbf{Fairness} $\uparrow$ & \textbf{Accuracy (\%)} $\uparrow$ \\
\midrule
Euclidean GNN + Fairness & $65.1 \pm 0.9$ & $41.9 \pm 0.7$ & $0.91 \pm 0.02$ & $0.83 \pm 0.01$ & $84.1 \pm 0.4$ \\
HGNN + Fairness (Ours)   & $\mathbf{57.6 \pm 0.8}$ & $\mathbf{45.6 \pm 0.6}$ & $\mathbf{0.86 \pm 0.02}$ & $\mathbf{0.92 \pm 0.01}$ & $\mathbf{86.8 \pm 0.3}$ \\
\bottomrule
\end{tabular}
\end{table*}
{\color{black}
\textbf{HGNN vs. Euclidean GNN Ablation (Same Fairness Constraints).}
Keeping the fairness goal, aggregation mechanism, and $\lambda_1$ constant, we substitute a Euclidean GNN with the same depth and hidden dimensions for the HGNN encoder in order to isolate the influence of hyperbolic geometry. Table~\ref{tab:hgnn_euclid_ablation} demonstrates that, under the same fairness criteria, HGNN consistently increases routing accuracy and fairness (Jain's index) while simultaneously producing positive efficiency trade-offs in latency/throughput/energy. These findings support the causal attribution that geometry-aware representation learning, not modifications to the fairness mechanism, is responsible for the observed increases.
}

\begin{table*}[t]
\centering
\footnotesize
\color{black}
\caption{Impact of intelligent adaptive fairness-weight control. Adaptive $\lambda_1$ automatically adjusts fairness emphasis during training.}
\label{tab:adaptive_lambda}
\setlength{\tabcolsep}{4pt}
\begin{tabular}{lccccc}
\toprule
\textbf{Method} & \textbf{Latency (ms)} $\downarrow$ & \textbf{Throughput (Mbps)} $\uparrow$ & \textbf{Energy (J)} $\downarrow$ & \textbf{Fairness} $\uparrow$ & \textbf{Accuracy (\%)} $\uparrow$ \\
\midrule
Geo-FairFed (fixed $\lambda_1$) & 57.6 & 45.6 & 0.86 & 0.92 & 86.8 \\
Geo-FairFed + Adaptive $\lambda_1$ & \textbf{56.9} & \textbf{46.1} & \textbf{0.85} & \textbf{0.94} & \textbf{87.4} \\
\bottomrule
\end{tabular}
\end{table*}
{\color{black}
\noindent\textbf{Adaptive fairness intelligence.}
We present an adaptive fairness-weight mechanism that dynamically modifies $\lambda_1$ in response to observed fairness levels during training in order to further increase accuracy and fairness. Table~\ref{tab:adaptive_lambda} illustrates how adaptive fairness control maintains comparable latency, throughput, and energy consumption while enhancing accuracy and fairness. This illustrates how federated routing systems benefit from intelligent fairness-aware optimization.}

\begin{table*}[t]
\centering
\caption{ Performance comparison of Geo-FairFed with state-of-the-art methods on all four datasets. 
Higher throughput/fairness and lower latency/energy indicate better performance.}
\label{tab:sota-4datasets}
\renewcommand{\arraystretch}{1.15}
\setlength{\tabcolsep}{3.5pt}
\footnotesize
\begin{tabular}{|l|ccccc|ccccc|ccccc|ccccc|}
\hline
\multirow{2}{*}{\textbf{Method}} &
\multicolumn{5}{c|}{\textbf{BA50}} &
\multicolumn{5}{c|}{\textbf{BA100}} &
\multicolumn{5}{c|}{\textbf{RocketFuel}} &
\multicolumn{5}{c|}{\textbf{TopologyZoo}} \\
\cline{2-21}
 & Lat. & Thr. & Ene. & Fair. & Acc. &
   Lat. & Thr. & Ene. & Fair. & Acc. &
   Lat. & Thr. & Ene. & Fair. & Acc. &
   Lat. & Thr. & Ene. & Fair. & Acc. \\
\hline
Shortest-Path \cite{oymak2019overparameterized} & 91.0 & 30.8 & 1.10 & 0.68 & 77.9 & 88.2 & 31.5 & 1.08 & 0.69 & 79.0 & 84.9 & 33.5 & 1.07 & 0.70 & 80.1 & 83.2 & 33.8 & 1.05 & 0.71 & 80.5 \\
DQN-Routing~\cite{liu2021deep} & 77.1 & 34.2 & 1.00 & 0.72 & 82.0 & 74.3 & 35.5 & 0.99 & 0.74 & 83.0 & 70.5 & 36.9 & 0.98 & 0.75 & 83.4 & 69.1 & 37.5 & 0.96 & 0.76 & 83.8 \\
FedAvg-Routing~\cite{cong2023soho} & 71.3 & 38.1 & 0.96 & 0.78 & 85.0 & 69.8 & 38.7 & 0.95 & 0.79 & 85.6 & 67.8 & 39.8 & 0.94 & 0.80 & 86.1 & 67.1 & 40.5 & 0.93 & 0.80 & 86.3 \\
FairFedAvg~\cite{zhang2024unified} & 63.9 & 41.7 & 0.90 & 0.86 & 88.2 & 63.1 & 42.5 & 0.89 & 0.87 & 88.7 & 61.9 & 43.3 & 0.88 & 0.88 & 89.3 & 61.2 & 43.6 & 0.87 & 0.88 & 89.5 \\
GCN-Routing~\cite{guang2024graph} & 68.2 & 39.5 & 0.93 & 0.81 & 86.8 & 67.1 & 40.0 & 0.92 & 0.82 & 87.2 & 64.5 & 41.0 & 0.91 & 0.83 & 88.0 & 63.7 & 41.6 & 0.91 & 0.83 & 88.2 \\
HGNN-Routing~\cite{xu2024scalable} & 66.3 & 40.7 & 0.92 & 0.82 & 88.0 & 65.5 & 41.3 & 0.91 & 0.83 & 88.5 & 63.8 & 42.2 & 0.90 & 0.84 & 89.0 & 62.9 & 42.7 & 0.89 & 0.85 & 89.3 \\
FedGNN~\cite{liu2024federated} & 64.8 & 41.5 & 0.91 & 0.84 & 89.0 & 63.9 & 42.1 & 0.90 & 0.85 & 89.4 & 61.5 & 43.0 & 0.89 & 0.86 & 89.9 & 60.8 & 43.2 & 0.89 & 0.87 & 90.0 \\
HFRL~\cite{chen2024toward} & 62.7 & 42.8 & 0.89 & 0.87 & 90.2 & 61.8 & 43.1 & 0.88 & 0.88 & 90.6 & 59.9 & 44.0 & 0.87 & 0.89 & 91.0 & 59.5 & 44.2 & 0.86 & 0.89 & 91.1 \\
\textbf{Geo-FairFed (ours)} & \textbf{58.9} & \textbf{44.6} & \textbf{0.87} & \textbf{0.91} & \textbf{92.3} & 
\textbf{58.0} & \textbf{44.9} & \textbf{0.87} & \textbf{0.92} & \textbf{92.6} &
\textbf{56.7} & \textbf{46.1} & \textbf{0.85} & \textbf{0.93} & \textbf{93.0} &
\textbf{56.4} & \textbf{46.5} & \textbf{0.85} & \textbf{0.93} & \textbf{93.3} \\
\hline
\end{tabular}
\end{table*}

\textbf{Comparison with state-of-the-art.}
Table~\ref{tab:sota-4datasets} provides a comprehensive dataset-wise comparison of \emph{Geo-FairFed} versus eight cutting-edge baselines across four network topologies. 
 The suggested technique consistently achieves the lowest latency and energy usage while maintaining the best throughput, fairness, and accuracy in all instances. 
 Geo-FairFed outperforms HFRL \cite{chen2024toward}, FedGNN \cite{liu2024federated}, and HGNN-Routing \cite{xu2024scalable} in terms of routing accuracy, fairness, and latency reduction. 
 The benefit is especially noticeable on the large-scale RocketFuel and TopologyZoo datasets, where hyperbolic geometry captures hierarchical relationships more effectively and fairness regularization assures equal participation among diverse clients. 
 These complete results demonstrate that combining geometric representation learning, federated optimization, and fairness control produces the best resilient and equitable routing approach in both synthetic and real-world network environments.

{\color{black}
\noindent \textbf{Discussion.} Similar to standard federated learning frameworks, our system's aggregation entity serves as a logical coordination role and, if it were implemented as a single physical server, could be a single point of failure. This problem can be lessened by fault-tolerant implementations like hierarchical or replicated aggregation, but they come with extra infrastructure and coordination overhead. Only the model update procedure is impacted when a node loses connectivity with the aggregation server; the node is not cut off from data forwarding. In these situations, until connectivity to the aggregator is restored, the node keeps routing using its most recent local model. Unlike SDN controllers that function on the data plane, the aggregator's computational and communication load scales with the model dimensionality and aggregation frequency instead of per-packet control. As the network grows, aggregation overhead can be further minimized by standard federated learning methods such as reduced aggregation frequency or partial client participation.

\noindent \textbf{Limitations.} Although NS-3 offers realistic network-level modeling, the current study does not explicitly describe several real-world deployment variables, such as stragglers, partial client participation, unstable connections, and synchronization delays. These factors have more of an impact on the availability and timeliness of model updates than they do on how the fairness-aware aim is formulated. Such deployment issues can be lessened by the suggested federated framework's compatibility with common robustness techniques, such as asynchronous aggregation and partial participation. Future research should focus on evaluating Geo-FairFed in light of these real-world limitations.
The current framework does not specifically simulate adversarial attacks like malicious model or client updates or manipulated fairness scores; instead, it assumes cooperative clients. Future research should focus on extending Geo-FairFed with strong aggregation and trust-aware fairness methods.
}

\section{Conclusion}
This study introduced \textit{Geo-FairFed}, a geometric fairness-aware federated routing architecture aimed to achieve both efficiency and equality in large-scale edge and 6G networks. 
Unlike traditional federated routing algorithms, which treat all clients evenly and use Euclidean representations, Geo-FairFed combines HGNNs with fairness-aware aggregation to capture hierarchical topology and diverse node circumstances. 
A combined optimization target was developed to reconcile routing performance, geometric consistency, and fairness using configurable regularization parameters.
Theoretical work confirmed convergence under constrained curvature and showed that fairness regularization results in a Pareto-efficient equilibrium without destabilizing the global model. 
Comprehensive simulations of synthetic and real-world network topologies demonstrated that Geo-FairFed consistently improves both average latency and Jain's fairness index when compared to cutting-edge federated and geometric routing baselines, with minimum additional communication cost.
The proposed approach emphasizes the relevance of including geometric representation and fairness standards in distributed network optimization, aiming to further close the gap between fairness, scalability, and intelligence in next-generation communication networks. 

\bibliography{main}
\bibliographystyle{IEEEtran}

\end{document}